\documentclass{article}

\usepackage[left=0.9in, bottom=0.9in, right=0.9in, top=0.9in]{geometry}
\usepackage{amsmath}
\usepackage{hyperref}
\usepackage{etoolbox}
\usepackage{bbm}
\usepackage{amssymb, eufrak, euscript}
\usepackage{booktabs}
\usepackage{amsmath}
\usepackage{graphicx,psfrag,epsf}
\usepackage{enumerate}
\usepackage[numbers, square]{natbib}
\usepackage{dsfont}
\usepackage{caption}
\usepackage{epstopdf}
\usepackage{xcolor, soul}
\usepackage{subcaption}
\usepackage{lscape}
\usepackage{setspace} 
\usepackage{longtable}
\usepackage[colorinlistoftodos,prependcaption,textsize=tiny]{todonotes}
\usepackage{url} 
\usepackage[T1]{fontenc}
\usepackage[utf8]{inputenc}
\usepackage{authblk}

\newcommand{\argmin}{\operatornamewithlimits{arg\,min}}
\newtheorem{theorem}{Theorem}
\newenvironment{proof}[1][Proof]{\textbf{#1.} }{\ \rule{0.5em}{0.5em}}

\title{\bf Exploring Financial Networks Using Quantile Regression and Granger Causality}

\author[1]{Kara Karpman \footnote{Equal contribution}
}
\author[1]{Samriddha Lahiry $^*$}
\author[2]{Diganta Mukherjee}
\author[1]{Sumanta Basu  \footnote{Corresponding author. Email: \href{mailto:sumbose@cornell.edu}{sumbose@cornell.edu}}}

\affil[1]{Department of Statistics and Data Science, Cornell University}
\affil[2]{Sampling and Official Statistics Unit, Indian Statistical Institute Kolkata}

\date{ }

\begin{document}
\maketitle

\begin{abstract}
In the post-crisis era, financial regulators and policymakers are increasingly interested in data-driven tools to measure systemic risk and to identify systemically important firms. Granger Causality (GC) based techniques to build networks among financial firms using time series of their stock returns have received significant attention in recent years. Existing GC network methods model conditional means, and do not distinguish between connectivity in lower and upper tails of the return distribution - an aspect crucial for systemic risk analysis. We propose statistical methods that measure connectivity in the financial sector using system-wide tail-based analysis and is able to distinguish between  connectivity in lower and upper tails of the return distribution. This is achieved using bivariate and multivariate GC analysis based on regular and Lasso penalized quantile regressions, an approach we call quantile Granger causality (QGC).  By considering centrality measures of these financial networks, we can assess the build-up of systemic risk and identify risk propagation channels. We provide an asymptotic theory of QGC estimators under a quantile vector autoregressive model, and show its benefit over regular GC analysis on simulated data.  We apply our method to the monthly stock returns of large U.S. firms and demonstrate that lower tail based networks can detect systemically risky periods in historical data with higher accuracy than mean-based networks. In a similar analysis of large Indian banks, we find that upper and lower tail networks convey different information and have the potential to distinguish between periods of high connectivity that are governed by positive vs negative news in the market.  

\end{abstract}

\section{Introduction}\label{sec:intro}

Understanding complex linkages among market participants in an interlinked financial market is of interest to  researchers and policy makers in financial economics. Since it is often difficult to access data on firms' balance sheet and counterparty transactions in real time, there is considerable interest in learning the structure of financial networks in a data-driven fashion \citep{billio2012econometric, diebold2014network}. Data-driven financial networks have been empirically successful for two types of analyses. First, these networks tend to be denser during the periods of financial crisis, providing a way to monitor systemic risk in the market. Second, firms with high network centrality in and around crisis period are deemed to be systemically important. For these reasons, it is important to develop statistical methods capable of discovering nuanced connectivity structure among financial firms from data.

In a typical data-driven financial network, each node represents a firm, and an edge between two nodes encode some form of ``relationship'' between the historical time series of the two  firms' health characteristics (e.g. stock returns or realized volatility). Broadly speaking, two types of relationships are commonly explored in the literature - contemporaneous association such as correlation or co-movement \citep{adrian2011covar, hardle2016tenet}, and lead-lag or Granger causality (GC) patterns showing one firms' data can be used to predict the behavior of the other firm \citep{billio2012econometric, diebold2014network}. In this paper, we focus on the second type of relationships.

In the context of modeling stock returns, a key object of interest is the tail risk captured by quantiles of return distributions. While a number of works have built financial networks based on contemporaneous association among the tail risks of firms, the existing literature on GC-based networks has predominantly focused on mean returns instead of predictability in the tails. To narrow this gap, in this paper we propose quantile Granger causality (QGC), which combines quantile regression and GC to build quantile-specific financial networks. We propose a pairwise and a system-wide variant of QGC. The former uses bivariate analyses on two firms' returns at a time, while the latter jointly analyzes all firms' returns with Lasso penalization to account for spurious connectivity patterns.
\begin{figure}[!h]
  \centering
  \includegraphics[width = 0.62\textwidth, trim = {0 3in 0 1.7in}, clip]{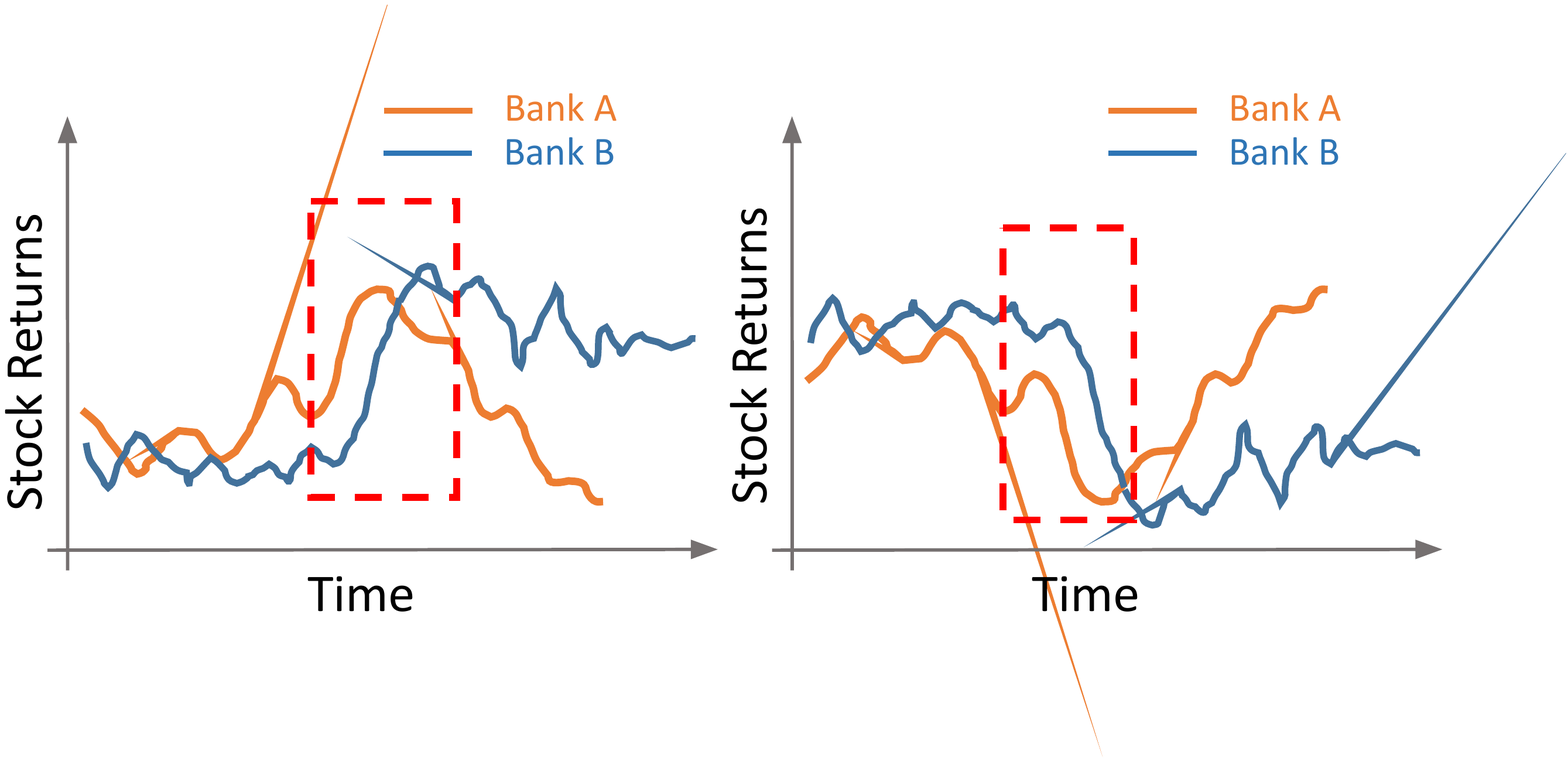}
  \includegraphics[width = 0.36\textwidth]{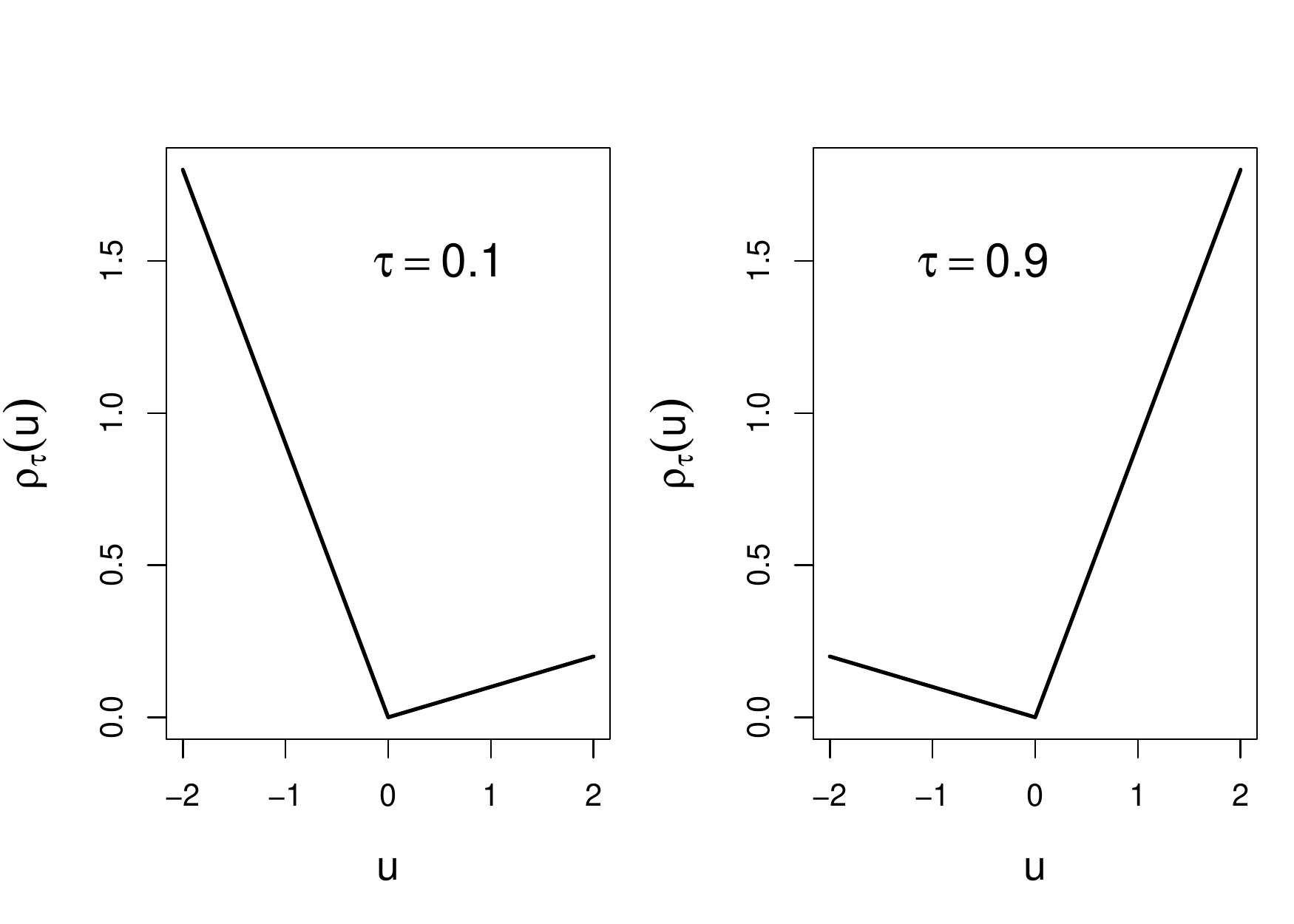}
  \caption{[Left]: GC relationship from bank A to bank B exists only at a higher or a lower tail of return distribution. [Right]: asymmetric check loss function designed to capture connections that are prominent at different quantiles $\tau$.}
  \label{fig:QGC_diagram}
\end{figure}
Our main premise is that by building on financial networks separately for upper and lower tails of return distributions, QGC can capture nuanced linkages among firms that are not prominent in mean-based GC networks. For instance, linkages amongst firms that exist only in bad days of the market will be accentuated by QGC, and we can identify central firms which play important roles primarily in crisis periods (see the left two plots in Figure \ref{fig:QGC_diagram}). Tail-specific networks could also provide new insight in risk monitoring. The existing literature often associate periods of high connectivity with periods of financial stress. However, focusing separately on upper and lower tail QGC networks can help periods of high connectivity which reflect shared confidence of market participants in the economy.

To build tail-specific financial networks among $p$ firms from a $p$-dimensional time series of their stock returns $\{\mathbf{x}_t\}_{t=1}^n$, we model conditional $\tau^{th}$ quantile of stock returns $Q_\tau(x_{i,t+1}|\mathbf{x}_t)$ for firm $i$ as a linear function of $\mathbf{x}_t$. A non-zero regression coefficient on $\mathbf{x}_{j,t}$ means there is an edge from $j$ to $i$ in the network. This is in contrast with regular GC-based networks, which model the conditional mean $\mathbb{E}(x_{i,t+1}|\mathbf{x}_t)$ instead. The conditional quantiles are modelled by using \textit{quantile regression} \citep{quantileregressionKoenker} that changes the symmetric squared error loss of regular regression to an asymmetric check loss function $\rho_\tau(u) = u(\tau - \mathds{1}[u \le 0])$. For smaller values of $\tau$, this loss function upweights negative losses and is minimized by a lower quantile of the return distribution (See the right two plots in Figure \ref{fig:QGC_diagram}).

We also provide a systematic asympotic analysis of Lasso penalized QGC estimators under a quantile vector autoregressive model \citep{koenker_xiao} in a fixed $p$ asymptotic regime. To the best of our knowledge, these are new results. Then we illustrate the finite-sample performance of QGC methods on simulated data. We find that multivariate QGC networks offer higher accuracy in detecting central firms in a hub-structured financial network.

Finally, we use both bivariate and Lasso penalized multivariate quantile regression to estimate financial networks for upper and lower quantiles. In particular, we explore the  evolution of network connectivity in two data sets on stock returns: (i) 75 large US financial firms (from banks, broker-dealer and insurance sectors), and (ii) 30 Indian banks, over 36-month rolling windows spanning nearly two decades. 
We find that QGC offers interesting additional insights into the structure of linkages that are not offered by mean-based GC networks. In the analysis of Indian banks, QGC was able to distinguish between periods of high connectivity aligned with negative news in the market from the periods of high connectivity aligned with positive news (see Figure \ref{fig:india_bank_time_series_avg_degree}). This is in sharp contrast with the current interpretation of mean-based GC networks that  always associates high connectivity with periods of systemic stress. In the analysis of US financial firms,  lower-tail QGC networks were able to detect periods of systemic stress with higher sensitivity than mean-based GC networks. 

The rest of the paper is organized as follows. Section \ref{sec:method} provides a description of bivariate and lasso penalized QGC estimation methods, and Section \ref{sec:theory} provides some asymptotic analysis. Performance of QGC on simulated data is investigated in Section \ref{sec:simulation}, and Sections \ref{sec:US_empirical_section} and \ref{sec:India_empirical_section} contain the empirical analyses on US and Indian firms.

\section{Methods}\label{sec:method}
We start with a brief review of bivariate and multivariate GC based methods in the literature for building financial networks, and then describe two new methods based on quantile regressions.

\subsection{Background: Bivariate and Multivariate Granger Causality Networks}
Bivariate GC methods for financial networks were proposed originally by \citet{billio2012econometric}, who construct networks of financial firms using bivariate GC  \citep{Granger}.  In this framework, two firms are connected if the stock returns of one have predictive power over the stock returns of the other.

Let $\{x_{i,t}\}_{t=1}^{n+1}$ and $\{x_{j,t}\}_{t=1}^{n+1}$ denote (stationary) time series of the stock returns of firms $i$ and $j$.  Consider a model in which each of these firms' returns is centered around a linear combination of lagged returns; that is,
\begin{align}\label{eq:GC 1}
x_{i,t+1} &= \alpha_{ii}x_{i,t} + \alpha_{ij}x_{j,t} + \epsilon_{i,t+1}, \\ \label{eq:GC2}
x_{j,t+1} &= \alpha_{ji}x_{i,t} + \alpha_{jj}x_{j,t}+ \epsilon_{j,t+1},
\end{align}    
with $\epsilon_{k,t+1} \overset{i.i.d.}\sim (0,\sigma_k^2)$ for $k \in \{i,j\}$.  Then firm $j$'s returns are said to Granger-cause firm $i$'s returns if $\alpha_{ij} \neq 0$, meaning firm $j$'s lagged returns help predict firm $i$'s returns over and above firm $i$'s own lagged returns.  An analogous statement can be made for firm $i$'s returns Granger-causing firm $j$'s returns.  

In \citet{billio2012econometric}, the authors form bivariate linear models in the form of (\ref{eq:GC 1})-(\ref{eq:GC2}) for all pairs of firms, $(i,j)$, in their sample.  They then construct networks whose nodes are firms and where an edge exists between nodes $i$ and $j$ if and only if
\begin{align*}
\max\left\{|{\alpha}_{ij}|, |{\alpha}_{ji}|\right\} \neq 0.
\end{align*}
Empirically the authors obtained linear regression estimates, $\hat{\alpha}_{ij}$ and $\hat{\alpha}_{ji}$, and tested if either estimate was non-zero at a 5\% significance level. The end result is an undirected network among firms with edges based on the above-described lead-lag relationships.  

Bivariate GC networks are constructed adopting a \textit{pairwise} approach, i.e. data from only two firms $i$ and $j$ are used without considering potential effects from a third firm $k$. This can lead to false positive network edges. A number of works in this field has adopted a more \textit{system-wide} view to tackle this issue. Here the stock returns (or volatilities) of all $p$ firms are modeled jointly as a multivariate time series, and an edge exists between two firms $i$ and $j$ if firm $j$ can predict the future of firm $i$ even after accounting for the present returns of all the other firms in the system. A vector autoregressive (VAR) model is used to estimate such GC networks. Formally, the bivariate model is  generalized to a multivariate model 
\begin{align}\label{eq: multivar_mean}
\mathbb{E}\left(x_{i,t+1} \big\vert \{x_{1,t},....,x_{p,t}\}\right) = 
\sum_{\ell=1}^p \alpha_{i\ell}x_{\ell,t} = \mathbf{x}_{t}'\boldsymbol\alpha_i,
\end{align}
and an edge exists between nodes $i$ and $j$ if $\max\{|\alpha_{ij}|, |\alpha_{ji}| \} \neq 0$. Note that a non-zero coefficient $\alpha_{ij}$ signifies that the returns of firm $j$ can predict the return of firm $i$ \textit{even after accounting for the returns of all the other firms in the system}.

To address the issue of high-dimensionality arising from including all $p$ firms in the model, different penalized regressions of sparsity inducing priors have been used in the literature  \citep{demirer2018estimating, preprint, ahelegbey2016sparse}. 
For example, in a Lasso penalized multivariate GC network, the edges to node $i$ are obtained by solving the following optimization problem
\begin{align}\label{eq:optimization_mean}
\min_{\boldsymbol\alpha \in \mathbb{R}^p}\left[\frac{1}{n}\sum_{t=1}^n\left(x_{i,t+1}-\mathbf{x}_{t}'\boldsymbol\alpha\right)^2 + \lambda_i \sum_{\ell = 1}^p |\alpha_{\ell}|\right],
\end{align}
where $\lambda_i$ is a tuning parameter that adjusts the amount of penalization \citep{preprint}.

\subsection{Method I: Bivariate Quantile Regression Networks}

We develop an analogous, quantile-based method that can potentially capture linkages driving financial crises, rather than those that exist at the center of the return distribution.  Notice that Granger causality represents causality in the mean: equation (\ref{eq:GC 1}) is equivalent to 
\begin{align}\label{eq: cond mean}
\mathbb{E}\left[x_{i,t+1} | x_{i,t},x_{j,t}\right] = \alpha_{ii}x_{i,t} + \alpha_{ij}x_{j,t}.
\end{align}
Thus non-zero $\alpha_{ij}$ indicates that firm $i$'s average return depends non-trivially on firm $j$'s lagged return.  However, given that we are interested in financial crises, considering a firm's average performance is insufficient.  Instead, we need to examine the returns of firms on their worst-performing days; that is, returns in the lower tail of the distribution.  Analogous to (\ref{eq: cond mean}), we express the conditional $\tau$-quantiles of $x_{i,t+1}$ and $x_{j,t+1}$ as linear combinations of lagged returns:
\begin{align}\label{eq: cond quant 1}
Q_\tau\left(x_{i,t+1} \big\vert x_{i,t},x_{j,t}\right) &= \beta_{ii}x_{i,t} + \beta_{ij}x_{j,t}, \\
\label{eq: cond quant 2}
Q_\tau\left(x_{j,t+1} \big\vert x_{i,t},x_{j,t}\right) &= \beta_{ji}x_{i,t} + \beta_{jj}x_{j,t}.
\end{align}
Then, if we take $\tau$ small (e.g. $\tau = 0.05$),  equation (\ref{eq: cond quant 1}) captures how firm $j$'s returns impact firm $i$'s returns on the latter's worst-performing days.  Continuing the analogy to Granger causality, we can form a financial network by placing an edge between $i$ and $j$ if 
\begin{align*}
\max\left\{|{\beta}_{ij}|, |{\beta}_{ji}|\right\} \neq 0.
\end{align*}

The model in (\ref{eq: cond quant 1})-(\ref{eq: cond quant 2}) can be estimated using quantile regression \citep{quantileregressionKoenker}.  Coefficient estimates are obtained by minimizing an asymmetric loss function, $\rho_\tau(u) = u\left(\tau - \mathds{1}\{u \leq 0\}\right)$, as opposed to the squared error loss function used in linear regression.  In particular, we write $\boldsymbol\beta_i = (\beta_{ii}, \beta_{ij})^\prime$ and take
\begin{align*}
\hat{\boldsymbol \beta}_i \in \argmin_{\boldsymbol\beta \in \mathbb{R}^2}\left[\frac{1}{n}\sum_{t=1}^n\rho_\tau\left(x_{i,t+1}-\mathbf{x}_{t}'\boldsymbol\beta\right)\right], 
\end{align*}
where $\mathbf{x}_{t} = (x_{i,t}, x_{j,t})'$.  The resulting network describes how each firm depends on others at the lower tail of its conditional returns distribution.  

\subsection{Method II: Multivariate Quantile Regression Networks}
\label{sec:lasso_penalized_multi_QR}
Next we refine the previously described method by enlarging the conditioning set, thereby ensuring that we only capture direct relationships between firms.  Indeed this is a major disadvantage of the bivariate approach, which may characterize two firms as being connected when they have only an indirect relationship.  For example, suppose firm $k$'s lagged returns impact firms $i$ and $j$'s current returns.  Then $\hat{\beta}_{ij}, \hat{\beta}_{ji}$ as estimated by bivariate quantile regression will be non-zero due to the fact that $i$ and $j$'s returns are being driven by a common source.  The network (which we hope illustrates only direct connections) will have a spurious link between $i$ and $j$. 

To correct for spurious connectivity, we need to condition on the lagged returns of \textit{all} firms in our sample, rather than taking the pairwise approach of equations (\ref{eq: cond quant 1})-(\ref{eq: cond quant 2}).  In other words, the model in (\ref{eq: cond quant 1}) can be generalized to 
\begin{align}\label{eq: multivar cond quant}
Q_\tau\left(x_{i,t+1} \big\vert \{x_{1,t},....,x_{p,t}\}\right) = 
\sum_{\ell=1}^p \beta_{i\ell}x_{\ell,t} = \mathbf{x}_{t}'\boldsymbol\beta_i,
\end{align}
where we have assumed that there are $p$ firms in our sample.  Let us return to the problematic example in which $i$ and $j$ were both influenced by $k$.  Under our new framework, (\ref{eq: multivar cond quant}), we will have ${\beta}_{ij} = 0 = {\beta}_{ji}$ and ${\beta}_{ik} \neq 0$; that is, the indirect connection will be eliminated and all of the weight placed on the firm whose lagged returns have a direct influence on firms $i$ and $j$'s returns.   

On the other hand, by including lagged returns of all the firms, we face the curse of dimensionality, i.e. the number of firms, $p$, may exceed the number of observations, $n$.  In this case quantile regression will be inconsistent.  However, if the number of non-zero coefficients, $s = |\{(i,\ell): \beta_{i\ell} \neq 0\}|$, is sufficiently small, then we may recover them consistently using penalization.  In short, penalization shrinks coefficient estimates towards $0$ so that any non-zero estimates represent only the strongest lead-lag relationships among firms.
\medskip

\noindent \textbf{Optimization Problem. }
The exact penalty we apply can take different forms; here we choose an $\ell_1$ penalty known as the Least Absolute Shrinkage and Selection Operator (or Lasso) \citep{Lasso}.   
Our objective function for the $i$th firm then becomes
\begin{align}\label{eq:optimization}
\min_{\boldsymbol\beta \in \mathbb{R}^p}\left[\frac{1}{n}\sum_{t=1}^n\rho_\tau\left(x_{i,t+1}-\mathbf{x}_{t}'\boldsymbol\beta\right) + \lambda_i\sum_{\ell = 1}^p |\beta_{\ell}|\right],
\end{align}
where $\lambda_i$ is a tuning parameter that adjusts the amount of penalization.  When $\lambda_i = 0$, we recover our original (non-penalized) estimate, whereas in the limit as $\lambda_i \to \infty$, $\hat{\boldsymbol\beta}_i$ --- a minimizer of (\ref{eq:optimization}) --- will be identically zero.  

Equation (\ref{eq:optimization}) consists of two terms, the first of which is a weighted sum of residuals, and the second of which is a sum of the $\ell_1$ norm of each coefficient.  We can formulate (\ref{eq:optimization}) as a linear programming problem by introducing appropriate slack variables and considering the dual form \citep{koenker2014convex}.  We then minimize the result using the Barrodale-Roberts algorithm, which is a modification of the simplex method \citep{barrodalerobertsalgorithm}.  

\medskip

\noindent \textbf{Tuning Parameter Selection. } 
The optimal tuning parameter is selected using cross-validation, a method in which we divide our dataset into different folds, predict data in each fold using a model trained on all the other folds, and then select the value of $\lambda_i \in (0,\infty)$ that gives the best average performance across all folds.  Notice that this method may yield a different optimal tuning parameter, $\lambda^\ast_i$, for each firm $i$.  In practice, it is also possible to use the same tuning parameter for each firm, by setting $\lambda_i = \lambda := \frac{1}{p}\sum_{i=1}^p \lambda^\ast_i$ for all $i = 1,...,p$.

\section{Asymptotic Analysis}\label{sec:theory}

We investigate consistency and asymptotic normality of QGC estimators in the classical framework where the number of time series $p$ is fixed, and the sample size $n \rightarrow \infty$. The proof techniques build upon the asymptotic analyses of quantile regression for fixed design matrix \cite{quantileregressionKoenker}, quantile autoregression for univariate time series \citep{koenker_xiao}, and  Lasso estimators \citep{fu2000asymptotics}. To the best of our knowledge, asymptotic theory of Lasso-penalized quantile regressions for multivariate autoregressive design has not been investigated in the literature. We require additional assumptions on the multivariate autocovariance function of the time series to complete the proof. We start by laying out some notations before stating the assumptions and the result. \\

\textit{Notations}. We use $\mathcal{F}_t$ to denote the $\sigma$-field generated by the random variables $\{\mathbf{x}_1, \ldots, \mathbf{x}_t \}$. For a univariate time series $y_t$, we use  $Q_\tau(y_t|\mathcal{F}_t)$ to denote the conditional quantile of $y_t$ given the  $\mathcal{F}_t$. We also use $||\boldsymbol v||_1=\sum_{i=1}^p|v|_i$ to denote the $\ell_1$ norm of a p-dimensional vector $\boldsymbol v$. Convergence in distribution and probability will be denoted by $\Rightarrow$ and $\xrightarrow{\mathbb{P}}$ respectively. We use standard small o notation i.e. $a_n=o(b_n)$ to denote $a_n/b_n\rightarrow 0$. Similarly for random variables $X_n$ and $Y_n$ we use  $X_n=o_\mathbb{P}(Y_n)$ to denote $X_n/Y_n\xrightarrow{\mathbb{P}}0$. In particular $X_n=o_\mathbb{P}(1)$ is used to denote $X_n\xrightarrow{\mathbb{P}}0$.\\

We make the following assumptions on the multivariate centered and stationary time series $\{\mathbf{x}_t\}$.
\begin{enumerate}
\item[(A1)] Consider $i \in \{1, \ldots, p\}$, and a $\tau \in (0, 1)$. The univariate time series $y_t:= \mathbf{x}_{i,t+1}$ satisfies 
$$
y_t = \mathbf{x}_t'\beta^* + \xi_t
$$
where $\xi_t \stackrel{i.i.d.}{\sim} F$, with $F^{-1}(\tau) = 0$ and density $f = F'$ satisfying $f(0)>0$.

\item [(A2)] $\Omega_0:= \mathbb{E}[\mathbf{x}_t\mathbf{x}'_t]$  is invertible.

\item [(A3)] The autocovariance function 
$$
\gamma_{ij}(k):= \mathbb{E}[(\mathbf{x}_{i,t}\mathbf{x}_{j,t}-(\Omega_0)_{ij})(\mathbf{x}_{i,t-k}\mathbf{x}_{j,t-k}-(\Omega_0)_{ij})]
$$
satisfies $\sum_{k=1}^{\infty}|\gamma_{ij}(k)|<\infty$ for all $i,j = 1, \ldots, p$.
\end{enumerate}

Assumption (A1) implies that the conditional $\tau^{th}$ quantile of $x_{i,t+1}$ can be expressed as a linear combination of $\mathbf{x}_t$. This assumption should be taken as an approximation to study the asymptotic behavior of the univariate QR regressions for a given $(i, \tau)$. We are not assuming that exact linearity holds for every $\tau \in (0, 1)$. A formal theory under that generative model requires additional considerations about quantile path crossings \cite{koenker_xiao}, which is beyond the scope of this paper.

Assumption (A2) is standard in the literature of multivariate time series \cite{lutkepohl}. Assumption (A3) is needed to control the variance of the sample variance-covariance matrix, and is also common in the literature \cite{hamilton1994time}. Together, (A2) and (A3) ensures that the eigenvalues of $\frac{1}{n}\sum_{t=1}^n \mathbf{x}_t \mathbf{x}_t'$ remain bounded away from $0$ and $\infty$ asymptotically.

Our main result, presented below, states that with the right choice of tuning parameter, the regular and the Lasso penalized QGC estimators with multivariate autoregressive design are $\sqrt{n}$-consistent.

\begin{theorem}\label{prop:lowdim}
Consider a random realization $\{\mathbf{x}_1,\mathbf{x}_2,\ldots,\mathbf{x}_{n+1}\}$  from a centered and stationary process $\{\mathbf{x}_t \}_{t \ge 1}$ satisfying assumptions (A1)-(A3). Define 

$$\hat{\boldsymbol\beta}\in \argmin_{\boldsymbol\beta \in \mathbb{R}^p}\left[\frac{1}{n}\sum_{t=1}^{n}\rho_{\tau}(y_t-\mathbf{x}_t'\boldsymbol\beta)+\lambda_n||\boldsymbol\beta||_1\right].$$
Then for any $\lambda_n \ge 0$ with $\lambda_n=o(n^{-1/2})$, the estimator $\hat{\boldsymbol \beta}$ satisfies 

$$\sqrt{n}f(0)\Omega^{1/2}_0(\hat{\boldsymbol\beta}-\boldsymbol\beta^*)\Rightarrow N(0,\tau(1-\tau)I_p).$$
 
\end{theorem}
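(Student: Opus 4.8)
The plan is to run the standard convexity argument for M-estimation (in the style of Pollard, Hjort--Pollard and Knight) on the recentered, locally reparametrized objective, adapted to handle the Lasso penalty and the temporal dependence of the autoregressive design. First I would set $\boldsymbol\delta = \sqrt{n}(\boldsymbol\beta-\boldsymbol\beta^*)$ and study
$$Z_n(\boldsymbol\delta) = \sum_{t=1}^n\left[\rho_\tau\left(\xi_t - \tfrac{1}{\sqrt{n}}\mathbf{x}_t'\boldsymbol\delta\right) - \rho_\tau(\xi_t)\right] + n\lambda_n\left(\|\boldsymbol\beta^* + \tfrac{1}{\sqrt{n}}\boldsymbol\delta\|_1 - \|\boldsymbol\beta^*\|_1\right),$$
whose minimizer is exactly $\hat{\boldsymbol\delta} = \sqrt{n}(\hat{\boldsymbol\beta}-\boldsymbol\beta^*)$, since subtracting the $\boldsymbol\delta$-free constants $\rho_\tau(\xi_t)$ and $\|\boldsymbol\beta^*\|_1$ does not move the argmin. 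Because $\rho_\tau$ and $\|\cdot\|_1$ are convex, $Z_n$ is convex in $\boldsymbol\delta$, so by the convexity lemma it suffices to establish finite-dimensional convergence of $Z_n(\boldsymbol\delta)$ and then read off the limiting argmin.

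The penalty is disposed of immediately: by the triangle inequality its magnitude is at most $n\lambda_n\|\boldsymbol\delta\|_1/\sqrt{n} = \sqrt{n}\lambda_n\|\boldsymbol\delta\|_1$, which tends to $0$ on compact sets of $\boldsymbol\delta$ precisely because $\lambda_n=o(n^{-1/2})$; this is the only place the rate assumption is used. For the loss part I would apply Knight's identity to write $Z_n(\boldsymbol\delta) = -\boldsymbol\delta'\mathbf{W}_n + R_n(\boldsymbol\delta) + o_\mathbb{P}(1)$, where $\mathbf{W}_n = \frac{1}{\sqrt{n}}\sum_{t=1}^n \mathbf{x}_t\psi_\tau(\xi_t)$ with $\psi_\tau(u)=\tau - \mathds{1}\{u\le 0\}$, and $R_n(\boldsymbol\delta) = \sum_{t=1}^n \int_0^{\mathbf{x}_t'\boldsymbol\delta/\sqrt{n}}(\mathds{1}\{\xi_t\le s\}-\mathds{1}\{\xi_t\le 0\})\,ds$ is the integral remainder.

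For the linear term I would invoke a martingale central limit theorem. Since in the QVAR model $\xi_t$ is the innovation entering $y_t=x_{i,t+1}$, it is independent of $\mathcal{F}_t$, so $\{\mathbf{x}_t\psi_\tau(\xi_t)\}$ is a martingale difference array for $\mathcal{G}_t := \sigma(\mathbf{x}_1,\ldots,\mathbf{x}_t,\xi_1,\ldots,\xi_{t-1})$; using $F(0)=\tau$ gives mean zero and per-term conditional variance $\tau(1-\tau)\,\mathbf{x}_t\mathbf{x}_t'$. The conditional-variance condition thus reduces to $\frac{1}{n}\sum_{t=1}^n \mathbf{x}_t\mathbf{x}_t' \xrightarrow{\mathbb{P}} \Omega_0$, which is exactly where (A2)--(A3) enter: summability of the autocovariances of the products $\mathbf{x}_{i,t}\mathbf{x}_{j,t}$ forces the variance of this sample average to vanish, yielding $L^2$ and hence in-probability convergence. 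A Lindeberg step, using boundedness of $\psi_\tau$ and $\mathbb{E}\|\mathbf{x}_t\|^2<\infty$, then delivers $\mathbf{W}_n \Rightarrow N(0,\tau(1-\tau)\Omega_0)$. For the remainder $R_n(\boldsymbol\delta)$, conditioning on the design makes the summands independent ($\xi_t$ are i.i.d.\ and independent of the regressors); a Taylor expansion of $F$ at $0$ gives conditional mean $\to \frac{1}{2}f(0)\boldsymbol\delta'\Omega_0\boldsymbol\delta$ (again via $\frac{1}{n}\sum_t \mathbf{x}_t\mathbf{x}_t'\to\Omega_0$), while the conditional variance is $o(1)$ because each summand is bounded by $|\mathbf{x}_t'\boldsymbol\delta|/\sqrt{n}$ and nonzero only on $\{0<\xi_t\le \mathbf{x}_t'\boldsymbol\delta/\sqrt{n}\}$, an event of probability $O(n^{-1/2})$, so the $n$ conditionally independent terms contribute total variance $O(n^{-1/2})$; hence $R_n(\boldsymbol\delta) \xrightarrow{\mathbb{P}} \frac{1}{2}f(0)\boldsymbol\delta'\Omega_0\boldsymbol\delta$.

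Combining the pieces gives $Z_n(\boldsymbol\delta)\Rightarrow Z(\boldsymbol\delta) = -\boldsymbol\delta'\mathbf{W} + \frac{1}{2}f(0)\boldsymbol\delta'\Omega_0\boldsymbol\delta$ with $\mathbf{W}\sim N(0,\tau(1-\tau)\Omega_0)$, a strictly convex quadratic (by (A2)) uniquely minimized at $\boldsymbol\delta^*=f(0)^{-1}\Omega_0^{-1}\mathbf{W}$. The convexity lemma upgrades finite-dimensional convergence of these convex functions to convergence of their argmins, so $\sqrt{n}(\hat{\boldsymbol\beta}-\boldsymbol\beta^*)\Rightarrow f(0)^{-1}\Omega_0^{-1}\mathbf{W}\sim N(0,\,f(0)^{-2}\tau(1-\tau)\Omega_0^{-1})$; left-multiplying by $f(0)\Omega_0^{1/2}$ and using $\Omega_0^{1/2}\Omega_0^{-1}\Omega_0^{1/2}=I_p$ yields the stated $N(0,\tau(1-\tau)I_p)$ limit. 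I expect the main obstacle to be the temporal dependence of the autoregressive design: unlike the fixed-design analysis of Koenker, both establishing $\frac{1}{n}\sum_t \mathbf{x}_t\mathbf{x}_t'\xrightarrow{\mathbb{P}}\Omega_0$ and verifying the martingale-CLT conditions hinge on (A3) to tame the dependence, and this is the genuinely new ingredient relative to existing quantile-regression asymptotics.
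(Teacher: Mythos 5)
Your overall architecture matches the paper's proof step for step: the same reparametrization $\boldsymbol\delta=\sqrt{n}(\boldsymbol\beta-\boldsymbol\beta^*)$, Knight's identity, disposal of the penalty via $\sqrt{n}\lambda_n\to 0$, a martingale CLT for the linear term, a law of large numbers for $\frac{1}{n}\sum_t\mathbf{x}_t\mathbf{x}_t'$ driven by (A2)--(A3), and Pollard's convexity lemma to pass from pointwise convergence of convex objectives to convergence of their minimizers. However, there is one genuine error, and it occurs precisely at the point you yourself identify as the crux: the treatment of the remainder $R_n(\boldsymbol\delta)$. You claim that ``conditioning on the design makes the summands independent ($\xi_t$ are i.i.d.\ and independent of the regressors).'' This is false for an autoregressive design. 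Here $\xi_t = x_{i,t+1}-\mathbf{x}_t'\boldsymbol\beta^*$, and $x_{i,t+1}$ is a coordinate of the next regressor vector $\mathbf{x}_{t+1}$: the innovations drive the future regressors. Indeed, conditional on the full design $\{\mathbf{x}_1,\ldots,\mathbf{x}_{n+1}\}$, every $\xi_t$ is a deterministic function of it, so the conditional distribution you invoke is degenerate and the fixed-design reasoning of Koenker does not transfer. The innovation $\xi_t$ is independent only of the \emph{past}, $\mathcal{F}_t=\sigma(\mathbf{x}_1,\ldots,\mathbf{x}_t)$, not of the design as a whole.

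The fix is exactly the device the paper uses, and one you already deploy correctly for the linear term $\mathbf{W}_n$: center each summand $\gamma_t^n(\boldsymbol\delta)$ of $R_n$ at its conditional mean $\bar\gamma_t^n(\boldsymbol\delta)=\mathbb{E}[\gamma_t^n(\boldsymbol\delta)\,|\,\mathcal{F}_t]$, observe that the differences $\gamma_t^n-\bar\gamma_t^n$ form a martingale difference sequence, and use their orthogonality to obtain the same additive variance bound you computed: each term is bounded by $|\mathbf{x}_t'\boldsymbol\delta|/\sqrt{n}$ and is nonzero only on an event of conditional probability $O(n^{-1/2})$, so the sum of second moments is $O(n^{-1/2})\to 0$ (this is the step the paper attributes to Herce's Theorem 1). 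The sum of the conditional means, $\sum_t\bar\gamma_t^n(\boldsymbol\delta)$, then converges to $\tfrac{1}{2}f(0)\boldsymbol\delta'\Omega_0\boldsymbol\delta$ via the Taylor expansion of $F$ at $0$ together with the LLN from (A3), exactly as you state. So your variance arithmetic and your conclusion are correct, but the justification must be the sequential martingale conditioning on $\mathcal{F}_t$ rather than conditioning on the design; with that substitution your argument coincides with the paper's.
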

\begin{proof}[Proof of Theorem \ref{prop:lowdim}]
For any $\boldsymbol{\beta} \in \mathbb{R}^p$, define $\boldsymbol{v}=\sqrt{n}(\boldsymbol{\beta}-\boldsymbol{\beta}^*)$. Define the two  minimizers
\begin{align*}
    \hat{\boldsymbol\beta}&\in \argmin_{\boldsymbol\beta \in \mathbb{R}^p}\left[\frac{1}{n}\sum_{t=1}^{n}\rho_{\tau}(y_t-\mathbf{x}_t'\boldsymbol\beta)+\lambda_n||\boldsymbol\beta||_1\right],\\ \hat{\boldsymbol{v}}&\in \argmin_{\boldsymbol{v} \in \mathbb{R}^p}\left[\frac{1}{n}\sum_{t=1}^{n}\rho_{\tau}(\xi_t-n^{-1/2}\mathbf{x}_t'\boldsymbol{v})+\lambda_n||\boldsymbol\beta^*+\boldsymbol v/\sqrt{n}||_1\right].
\end{align*}

\noindent Note that $\hat{\boldsymbol{v}}=\sqrt{n}(\hat{\boldsymbol{\beta}}-\boldsymbol{\beta}^*)$.\\

\noindent We use the well-known Knight's identity \citep{quantileregressionKoenker}
\begin{align*}
    \rho_{\tau}(u-v)-\rho_{\tau}(u)=-v(\tau-\mathds{1}\{u<0\})+\int_0^v(\mathds{1}\{u<z\}-\mathds{1}\{u<0\})dz
\end{align*}
to write
\begin{align*}
Z_n(\boldsymbol v)&=\sum_{t=1}^n[\rho_{\tau}(\xi_t-n^{-1/2}\mathbf{x}_t'\boldsymbol{v})-\rho_{\tau}(\xi_t)]\\
&=\sum_{t=1}^n-n^{-1/2}\mathbf{x}_t'\boldsymbol{v}(\tau-\mathds{1}\{\xi_t<0\})+\sum_{t=1}^n\int_0^{n^{-1/2}\mathbf{x}_t'\boldsymbol{v}}(\mathds{1}\{\xi_t<z\}-\mathds{1}\{\xi_t<0\})dz.
\end{align*}

\noindent Let $\gamma^n_t(\boldsymbol v)=\int_0^{n^{-1/2}\mathbf{x}_t'\boldsymbol{v}}(\mathds{1}\{\xi_t<z\}-\mathds{1}\{\xi_t<0\})dz$ and $\bar{\gamma}^n_t(\boldsymbol v)=\mathbb{E}[\gamma^n_t(\boldsymbol v)|\mathcal{F}_{t}]$. Also define
$$
W_n(\boldsymbol v)=\sum_{t=1}^n\gamma^n_t(\boldsymbol v), \quad \bar{W}_n(\boldsymbol v)=\sum_{t=1}^n\bar{\gamma}^n_t(\boldsymbol v).
$$
We analyse the behaviour of the term $\bar{W}_n(\boldsymbol v)$.
\begin{align*}
  \bar{W}_n(\boldsymbol v)&=\sum_{t=1}^n\int_0^{n^{-1/2}\mathbf{x}_t'\boldsymbol{v}}(F(z)-F(0))dz\\
  &= \sum_{t=1}^n\int_0^{n^{-1/2}\mathbf{x}'_t\boldsymbol{v}}f(0)zdz+o_{\mathbb{P}}(1)\\
    &= \frac{f(0)}{2}\boldsymbol{v}'\left(\frac{1}{n}\sum_{t=1}^n\mathbf{x}_t\mathbf{x}'_t\right)\boldsymbol{v}+o_{\mathbb{P}}(1).
  \end{align*}
We first show that the first term converges to $\frac{f(0)}{2}\boldsymbol{v}'\Omega_0\boldsymbol{v}$ in probability. Indeed it is enough to show $\frac{1}{n}\sum_{t=1}^n\mathbf{x}_{i,t}\mathbf{x}_{j,t}\xrightarrow{\mathbb{P}}(\Omega_0)_{ij}$.
From the covariance stationarity we have 
$\mathbb{E}[\mathbf{x}_{i,t}\mathbf{x}_{j,t}]=(\Omega_0)_{ij}$. Since 
$$
\mathbb{E}[(\mathbf{x}_{i,t}\mathbf{x}_{j,t}-(\Omega_0)_{ij})(\mathbf{x}_{i,t-k}\mathbf{x}_{j,t-k}-(\Omega_0)_{ij})]=\gamma_{ij}(k)
$$
with $\sum_{k=1}^{\infty}|\gamma_{ij}(k)|<\infty$, we can use Proposition 7.5 of \citet{hamilton1994time} with $Y_t=\mathbf{x}_{i,t}\mathbf{x}_{j,t}$ to obtain $\frac{1}{n}\sum_{t=1}^n\mathbf{x}_{i,t}\mathbf{x}_{j,t}\xrightarrow{\mathbb{P}}(\Omega_0)_{ij}$.
Thus we obtain
$$\bar{W}_n(\boldsymbol v)\xrightarrow{\mathbb{P}}\frac{f(0)}{2}\boldsymbol{v}'\Omega_0\boldsymbol{v}.$$
We observe that $\gamma^n_t(\boldsymbol v)- \bar{\gamma}^n_t(\boldsymbol v)$ constitute a Martingale difference sequence and using an argument similar to the one used in proving Theorem 1 in \citet{herce}, we can show that $W_n(\boldsymbol v)- \bar{W}_n(\boldsymbol v)\xrightarrow{\mathbb{P}}0$ and hence $W_n(\boldsymbol v)\xrightarrow{\mathbb{P}}\frac{f(0)}{2}\boldsymbol{v}'\Omega_0\boldsymbol{v}$.\\

\noindent Let $U_n(\boldsymbol v)=-\sum_{t=1}^n \left[n^{-1/2}\mathbf{x}'_t\boldsymbol{v}(\tau-\mathbbm{1}(\xi_t<0))\right]$ and observe that
$$
\mathbb{E}[\mathbf{x}'_t\boldsymbol{v}(\tau-\mathbbm{1}(\xi_t<0))|\mathcal{F}_{t}]=0.
$$
Thus we can use Martingale CLT \citep{hall1980martingale} to conclude that
$$
U_n(\boldsymbol v)\Rightarrow N(0,\tau(1-\tau)\boldsymbol{v}'\Omega_0\boldsymbol{v}).
$$

\noindent Let $Z(\boldsymbol v)=\frac{f(0)}{2}\boldsymbol{v}'\Omega_0\boldsymbol{v}-\boldsymbol{v}'\Omega_0^{1/2}Z$, where $Z \sim N(0,\tau(1-\tau)I_p)$, and define

$$\tilde{Z}_n(\boldsymbol v)=Z_n(\boldsymbol v)+n \lambda_n(||\boldsymbol\beta^*+\boldsymbol v/\sqrt{n}||_1-||\boldsymbol\beta^*||_1).$$

\noindent It can be seen that if $\lambda_n=o(n^{-1/2})$, then
$$
\tilde{Z}_n(\boldsymbol v)\Rightarrow Z(\boldsymbol v).
$$ 
Also it follows that the minimizer of $Z(\boldsymbol v)$
is $f(0)^{-1}\Omega_0^{-1/2}Z$. Now we adopt a standard convexity argument from \citet{pollard1991asymptotics}, which has also been used in the analyses of univariate QAR processes in \citet{koenker_xiao} and Lasso penalized quantile regression for fixed design in \citet{wu2009variable}. The convexity argument ensures that since $\tilde{Z}_n(\boldsymbol v)$ converges to $Z(\boldsymbol v)$ as a process, the minimizer of $\tilde{Z}_n(\boldsymbol v)$ converges to that of $Z(\boldsymbol v)$ in distribution. Thus we obtain

$$\hat{\boldsymbol{v}}=\sqrt{n}(\hat{\boldsymbol{\beta}}-\boldsymbol{\beta}^*)\Rightarrow f(0)^{-1}\Omega_0^{-1/2}Z.$$
Rearranging the terms, we obtain 
$$\sqrt{n}f(0)\Omega^{1/2}_0(\hat{\boldsymbol\beta}-\boldsymbol\beta^*)\Rightarrow N(0,\tau(1-\tau)I_p),$$
proving the desired result.
\end{proof}

\section{Numerical Experiments}\label{sec:simulation}
In this section, we perform a simulation study to compare how accurately the networks based on multivariate GC and QGC detect lower tail linkages.  Our simulation study is based on a hub network in which there exist linkages between the hub node and the nodes connected to it \textit{only in the lower tail} of the nodes' return  distribution.  We generate time series data according to a model with the given network structure, and compare the two methods' sensitivity and specificity.  We report results aggregated over 50 replicates.

We consider networks with $p \in \{30, 70\}$ nodes, corresponding -- approximately -- to the size of the empirical U.S. and India networks that we estimate in Sections \ref{sec:US_empirical_section} and \ref{sec:India_empirical_section}, respectively.  Each simulated network contains $\frac{p}{10}$ connected components of 10 nodes each, one node acting as the hub and the rest serving as peripheral nodes (see Figure \ref{fig:hub_network_schematic} for schematic).  We generate time series for each node according to the following model.  For a hub node, $h$, and a peripheral node, $x$, that is connected to $h$, we take the values of the time series at time $t$ to be
\begin{align}
\label{eq:hubTS}
    h_t &= \begin{cases}
    0.4h_{t-1} + \epsilon^{O,h}_t & \text{if } f_t = 0 \\
    \epsilon^{B,h}_t & \text{if } f_t = 1
    \end{cases}\\
    \label{eq:otherTS}
    x_t &= \begin{cases}
    0.4x_{t-1} + \epsilon^{O,p}_t & \text{if } f_{t-1} = 0 \\
    0.4x_{t-1} + 0.6h_{t-1} + \epsilon^{B,p}_t & \text{if } f_{t-1} = 1
    \end{cases}
\end{align}
where $\epsilon^{O,h}_t$, $\epsilon^{O,p}_t$, and $\epsilon^{B,p}_t$ are independent and identically distributed $\mathcal{N}(\mu = 0, \sigma = 0.1)$ random variables, $\epsilon^{B,h}_t \sim \mathcal{N}(\mu = -0.8, \sigma = 0.1)$, and $f_t \sim Bernoulli(0.05)$.  Superscript $O$ (resp. $B$) denotes ``ordinary'' (resp. ``bad'') time points, while superscript $h$ (resp. $p$) stands for ``hub'' (resp. ``peripheral'') node.  For example, $\epsilon^{O,h}_t$ is the error term for a hub node at an ``ordinary'' time point.  The Bernoulli random variable, $f_t$, represents an exogenous factor, such as the state of the economy at time $t$.  When the economy declines ($f_t = 1$), the hub node's time series descends to a low value (see equation \eqref{eq:hubTS}).  This is captured by the fact that $\epsilon^{B,h}_t$ has mean $-0.8$, which is approximately the 20$^{th}$ percentile of U.S. stock returns in our data.\footnote{As detailed in Section \ref{sec:US_empirical_section}, we consider 36-month rolling windows of historical U.S. stock returns.  For each of these windows, we computed the 20th percentile of the returns, and then calculated the average of these values over all windows to yield a number that was approximately -0.8.}  At the following time point, the time series of the peripheral nodes that are connected to the hub may also descend: from equation \eqref{eq:otherTS}, we see that the value of $x_t$ depends on its own past, $x_{t-1}$, and on the hub node's past, $h_{t-1}$.  On the other hand, when the economy is doing well ($f_t = 0$), both the hub and peripheral nodes evolve according to an $AR(1)$ process.  

\begin{figure}[ht]
\centering
\begin{subfigure}{.25\textwidth}
\includegraphics[scale=0.3, trim = {2cm 2cm 3cm 2cm}, clip]{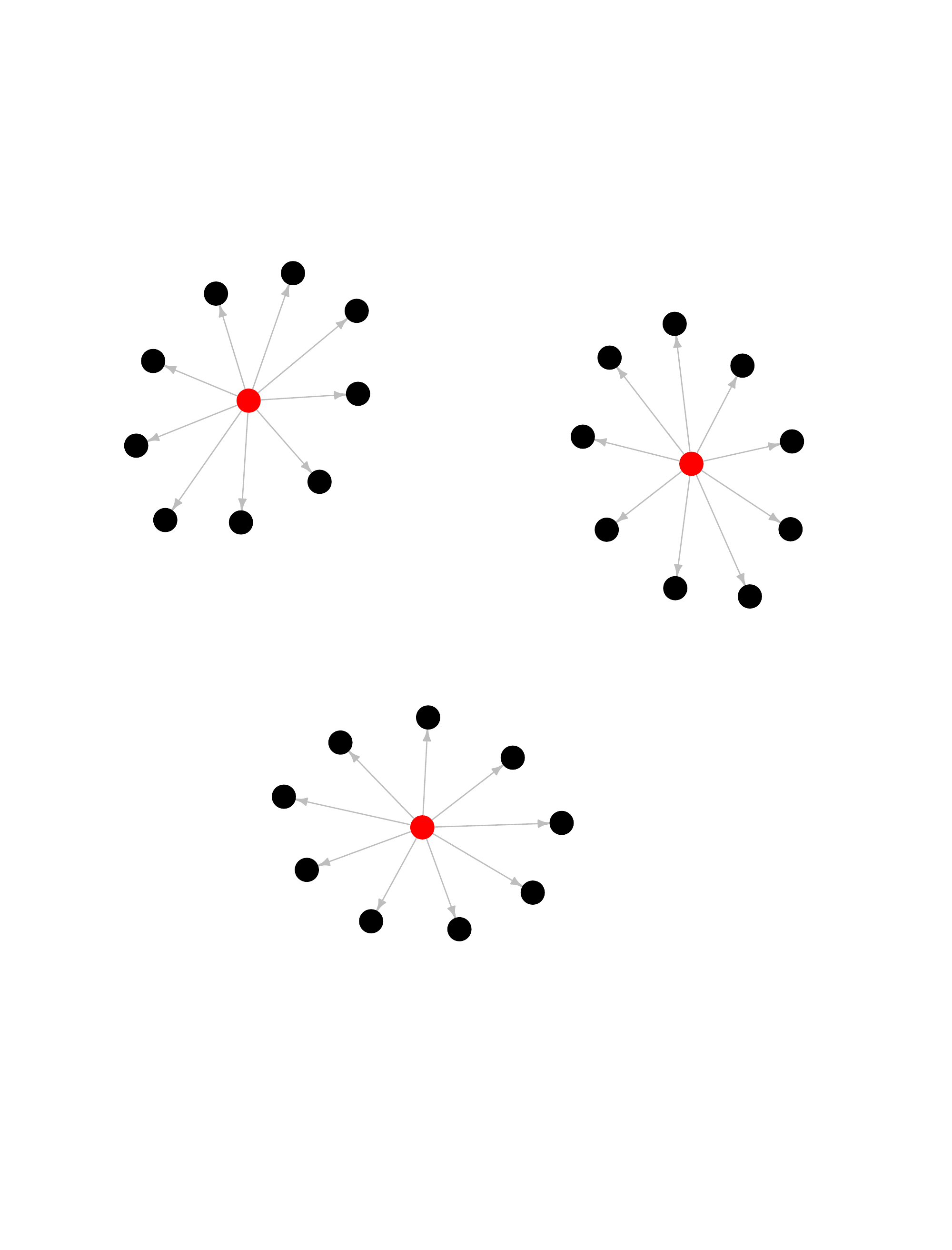}
\end{subfigure}
\hspace{0.5in}
  \begin{subfigure}{.25\textwidth}
  \includegraphics[angle = -90, scale=0.26, trim = {0cm 0cm 0cm 0cm}, clip]{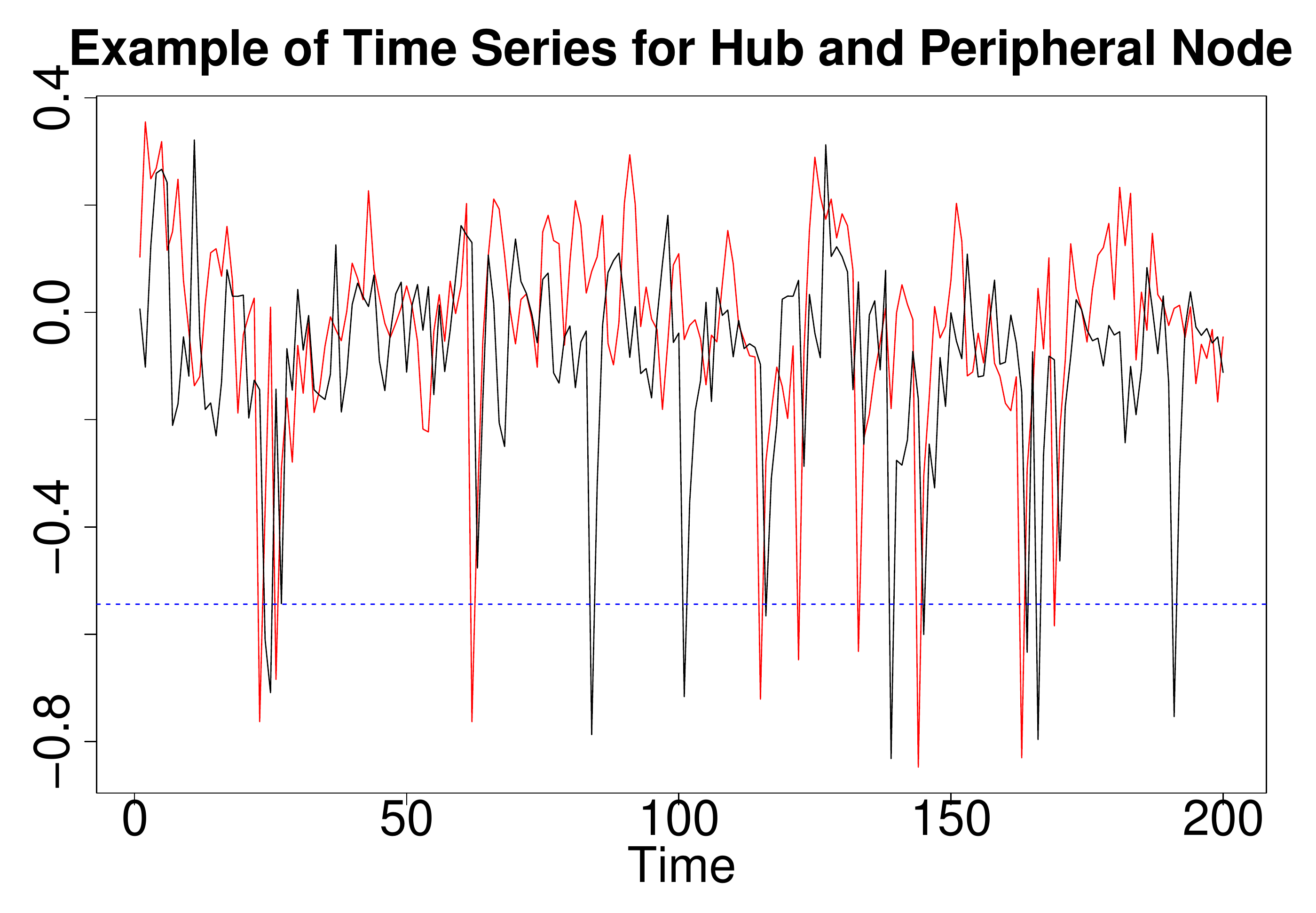}
  \end{subfigure}
    \caption{[Left]: a hub network of $p = 30$ nodes, consisting of 3 connected components with 10 nodes each.  The 3 hub nodes are indicated in red, while all non-hub (i.e., peripheral) nodes are black.  [Right]: sample time series generated according to equations \eqref{eq:hubTS} and \eqref{eq:otherTS}.  The solid red (resp. black) line depicts the time series for one of the hub (resp. peripheral) nodes, with the dashed blue line marking the empirical 5$^{th}$ percentile of the peripheral node's time series.  Note that, by construction, the value of the peripheral node's time series often drops below its 5$^{th}$ percentile following a drop in the hub node's time series.  
    }
    \label{fig:hub_network_schematic}
\end{figure}

Using equations \eqref{eq:hubTS} and \eqref{eq:otherTS}, we generate time series of length $n \in \{25, 50, 75, 100\}$\footnote{For each $n$, we use a burn-in period of length 500.} and build networks based on Lasso penalized GC and QGC ($\tau = 0.05$) on our simulated datasets.  In each case, we use 10-fold cross-validation to compute the optimal penalization tuning parameter for each node.  

Table \ref{table: sens_spec_hub_ntwk} shows the sensitivity and specificity ( aggregated over 50 replicates) of the two methods for various $(n,p)$ combinations.  For all values of $n$ and $p$, QGC yields higher specificity (i.e., fewer false positives) than does GC, and two methods have comparable sensitivity (within one standard deviation of each other).  Taken together, these results suggest that quantile regression can successfully detect tail linkages while avoiding many of the false positives produced by mean-based methods. Figure \ref{fig:hub_ntwk_proportion} further illustrates how QGC produces fewer spurious edges than GC.  

\begin{table}[ht]
\centering
\setlength{\tabcolsep}{7pt}
\begin{tabular}{c|cccccccc}
\toprule
& \multicolumn{4}{c}{$p = 70, |E| = 63$} & \multicolumn{4}{c}{$p = 30, |E| = 27$} \\ 
& \multicolumn{2}{c}{Sensitivity} & \multicolumn{2}{c}{Specificity} &  \multicolumn{2}{c}{Sensitivity} & \multicolumn{2}{c}{Specificity} \\
$n$ & GC & QR 0.05 & GC & QR 0.05 & GC & QR 0.05 & GC & QR 0.05
\\ \midrule
25 & 62 (14) & 55 (14) & 91 (1) & 98 (0) &  65 (20)	& 56 (22)&	82 (3)&	97 (1)\\
50 & 85 (10) & 81 (13) & 86 (2) & 98 (0) &   86 (12)&	83 (15)	&81 (4)	&95 (2) \\
75 & 94 (7) & 92 (8) & 85 (2) & 97 (0) &   94 (11)&	93 (12)	&79 (4)	&93 (2) \\
100 & 99 (3) & 97 (4) & 85 (2) & 96 (1)  &  97 (6)	&98 (6)	&78 (4) &91 (2) \\
\bottomrule
\end{tabular}
\caption{Sensitivity and specificity for hub network recovery using networks based on GC and QGC ($\tau = 0.05$).  Mean sensitivity and specificity (averaged over 50 replicates) are expressed as percentages, with standard deviations in parentheses. \label{table: sens_spec_hub_ntwk}}
\end{table}

\begin{figure}[h]
    \centering
    \includegraphics[scale = 0.85, trim = {2.3cm 0.9cm 1cm 0.9cm}, clip = TRUE]{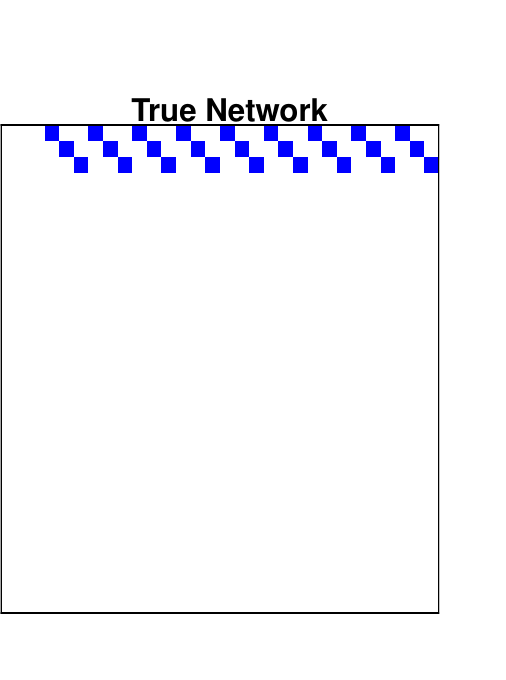}
    \includegraphics[scale = 0.35, trim = {0.4cm 0.9cm 0.9cm 0cm}, clip = TRUE]{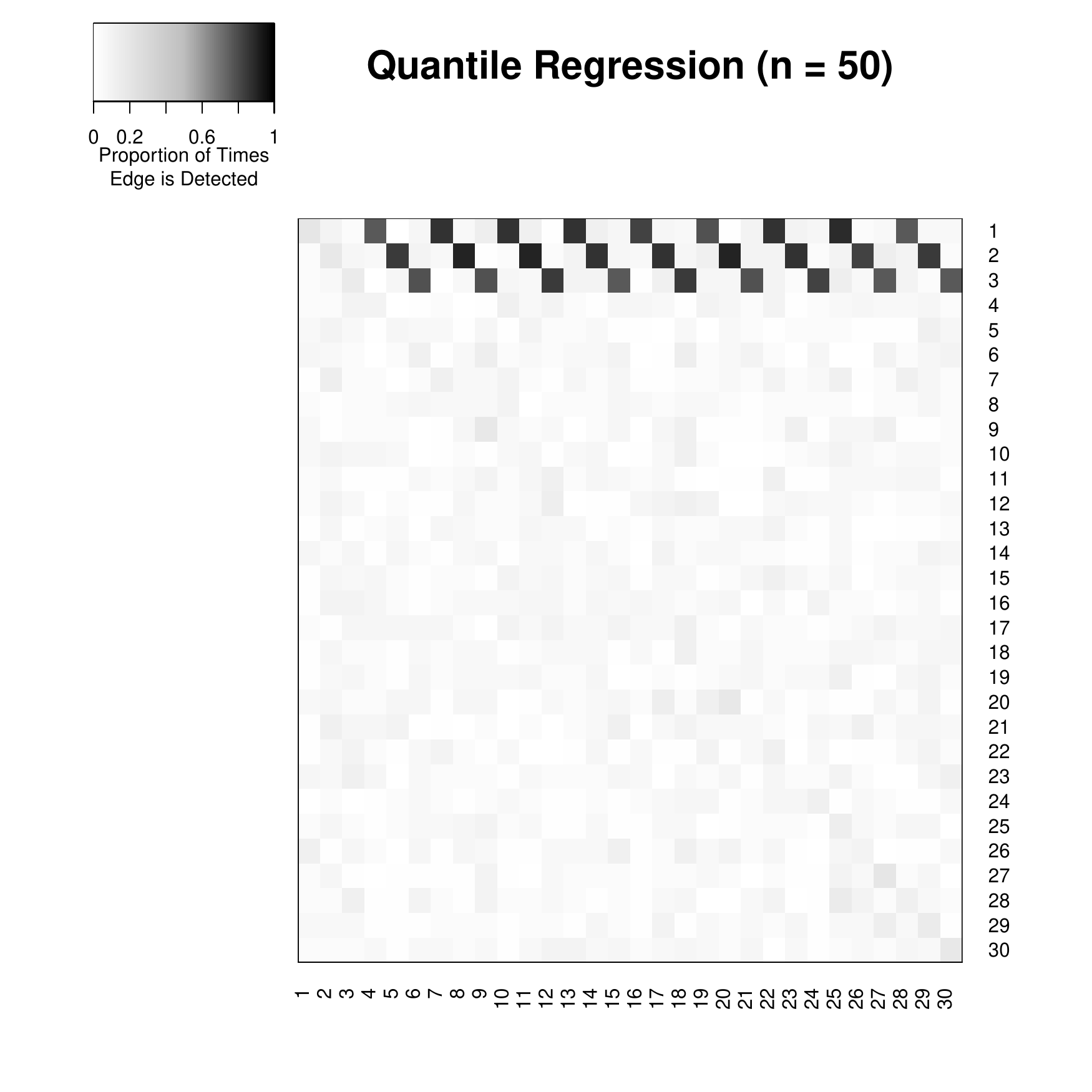}
    \includegraphics[scale = 0.35, trim = {0.4cm 0.9cm 0.9cm 0cm}, clip = TRUE]{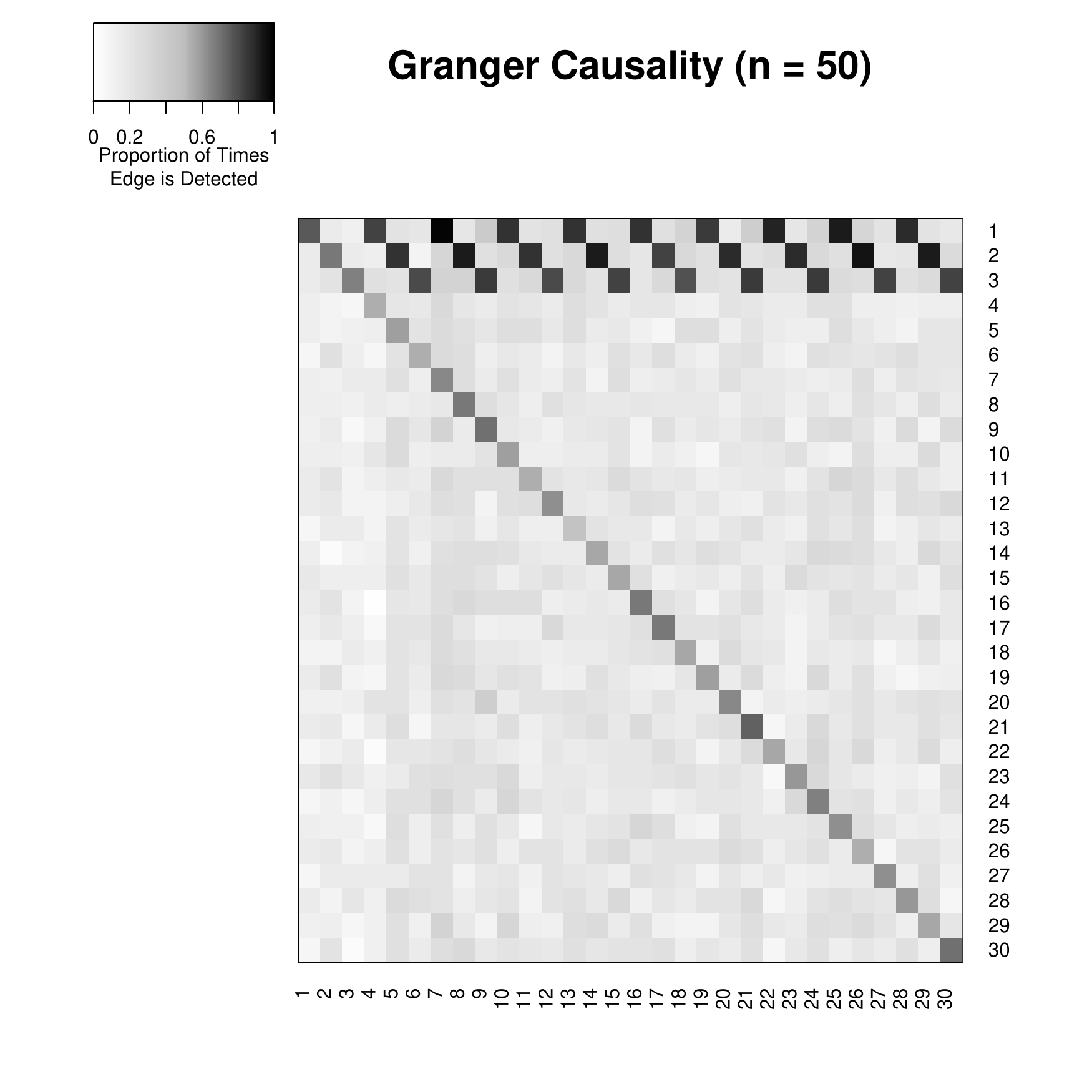}
    \caption{Heatmaps for edge detection based on simulated data from hub networks.  [Left]: the binary adjacency matrix for a hub network with $p = 30$ nodes.  Nodes 1-3 are the hubs and blue squares indicate presence of an edge.  [Middle]: the proportion of times -- out of 50 experiments -- that the edge is detected by QGC ($\tau = 0.05$). [Right]: the proportion of times -- out of 50 experiments -- that the edge is detected by GC.}
    \label{fig:hub_ntwk_proportion}
\end{figure}
\section{Empirical Results:  U.S. Financial Firms}
\label{sec:US_empirical_section}

{We now apply the above methods to historical stock returns. After describing the data collection, preprocessing, and network formation details, we illustrate how well the network centrality measures can detect bouts of financial instability during the period of 1990-2012. We also focus on the Financial Crisis of 2007-2009 and take a closer look at the most central firms during this time.}

\medskip

\noindent {\textbf{Data Collection and Pre-processing}}. 
We collected stock price data from the University of Chicago's Center for Research in Security Prices (CRSP) using the Wharton Research Data Services \citep{CRSP}. 
 
Following \citet{billio2012econometric} and \citet{preprint}, the original data set contains monthly stock returns from January 1990 to December 2012 on firms in three financial sectors: banks (BA), primary broker/dealers (PB), and insurance companies (INS).  Sectoral membership of firms can be identified using the Standard Industrial Classification (SIC) code, a 4-digit code used by the U.S. government to classify industries.  

We divide the time period into 36-month rolling windows.  For each window, we rank the firms according to their average market capitalization, and discard all but the top 25 firms from each sector.  For each window, this leads to a data matrix of raw returns $\mathbf{X} \in \mathbb{R}^{(n+1) \times p}$, where $n+1 = 36$ and $p = 75$.  

Following \citet{billio2012econometric}, we used a GARCH(1,1) filter \citep{GARCH} to remove heteroskedasticity in raw returns. 
 
GARCH(1,1) models the standard deviation of $x_{i,t}$ on the lagged value of the series, $x_{i,t-1}$, as well as the series' lagged standard deviation.  Specifically, we assume
\begin{align*}
x_{i,t} &= \mu_i + \sigma_{i,t}\epsilon_{i,t} \hspace{10mm}\epsilon_{i,t} \sim \mathcal{N}(0,1) \\
\sigma_{i,t}^2 &= \omega_i + \gamma_i(x_{i,t-1} - \mu_i)^2 + \eta_i\sigma_{i,t-1}^2,
\end{align*}
where $\omega_i, \gamma_i,$ and $\eta_i$ are constants.
Thus, an extreme return and/or high volatility at time $t-1$ is propagated to the series at time $t$.  
Given a series $\left\{x_{i,t}\right\}_{t=1}^{n+1}$, we compute the maximum likelihood estimates for $\mu_i, \omega_i, \gamma_i$, and $\eta_i$ and then form the standardized residuals
\begin{equation*}
\hat{\epsilon}_{i,t} := \frac{x_{i,t} - \hat{\mu}_i}{\hat{\sigma}_{i,t}} \overset{\text{approx}}\sim \mathcal{N}(0, 1) \hspace{10mm} t = 1,...,(n+1).
\end{equation*}
All of our GC network models are fitted on these residuals  $\left\{\hat{\epsilon}_{i,t}\right\}_{t=1}^{n+1}$, for $i = 1,...,p$.  

\medskip
\noindent {\textbf{Network Estimation and Summary Statistics}}. We build financial networks based on GC and QGC ($\tau$ = $0.05,\, 0.9$) for each of the 36-month rolling windows spanning 1990 to 2012.   While QGC($\tau=0.05$) identifies linkages in the lower tail of the stock returns distribution, QGC($\tau=0.9$) does the same for the upper tail, corresponding to firms' best-performing days.

In order to study the evolution of network connectedness over time and detect central firms, we used degree centrality to summarize the network information. The degree of node $i$ is defined as the number of edges adjacent to $i$; that is, how many direct neighbors $i$ has. The degree of the network is defined as the average degree over all nodes in the network. Our methods produced qualitatively similar results when we used closeness centrality, so we only report the results based on degree.

\subsection{Evolution of Network Summary Statistics (1990-2012)}

In what follows, we demonstrate how well our network centrality measure (average degree) can detect financial crisis periods.  In the left column of Figure \ref{fig:time_series_US_networks}, we plot the average network degree for each 36-month window.  These networks were constructed using bivariate Granger causality (GC), as well as bivariate quantile regression with $\tau = 0.05$ (QR-0.05) and $\tau = 0.9$ (QR-0.9).  In each window, we scale degree by its historical average (over all rolling windows) so that we can make valid comparisons across the three estimation procedures. 

Our first finding is that average degree often increases before or during systemic events.  For example, bivariate Granger causality networks display increased connectivity during the 1998 Russian default and the subsequent collapse of the hedge fund, Long Term Capital Management (LTCM), the 2008 bankruptcy of Lehman Brothers (considered a critical event during the 2007-2009 U.S. Financial Crisis), and the U.S. debt-ceiling crisis of 2011.  
 
We observe the same overall pattern in the networks that were estimated using quantile regression -- with some notable differences.  First, quantile regression with $\tau = 0.9$ does not clearly identify the US Financial Crisis; there is very little increase in connectivity during this period.  However, the lower tail networks display a prominent increase in average degree during this time period, and -- interestingly -- the increase occurs earlier than it does in Granger causality networks.  QR-0.05 networks become dense starting around the beginning of 2008, whereas the Granger causality networks do not reach peak connectivity until fall of that year.  Second, QR-0.05 networks have high average degree around the 2000 dot-com bubble, while GC networks do not.  Thus, by considering interlinkages in the lower tails of the returns distribution, we may be able to identify events that are not discernible by looking at the center of the distribution. 

In the right column of Figure \ref{fig:time_series_US_networks}, we repeat the above plots for networks that have been estimated using the multivariate approach.  As before, Granger causality and lower-tail quantile regression networks are dense around several systemic events, including the 1998 Russian default, the 2007-2009 Financial Crisis, and the 2011 U.S. debt ceiling crisis.  The upper tail networks do not display an increase in average degree during many of these periods, suggesting that much of the connectivity in mean-based (Granger causality) networks may be driven by connections on bad days of the market. \\ 

\begin{figure}[h]
\centering
\begin{subfigure}{0.45\textwidth}
    \includegraphics[scale=0.25,angle=-90]{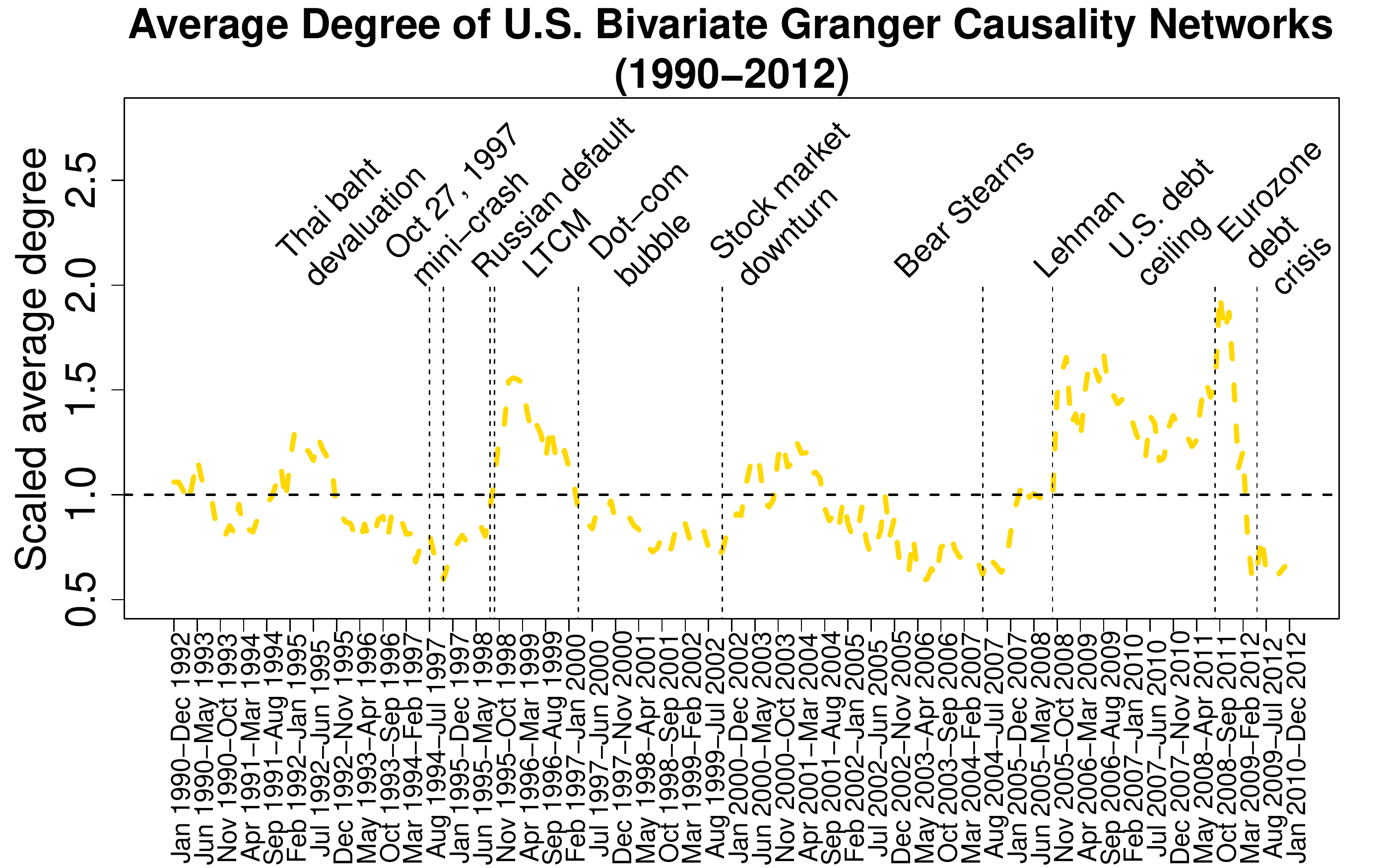}
\end{subfigure}
\begin{subfigure}{0.45\textwidth}
    \includegraphics[scale=0.25,angle=-90]{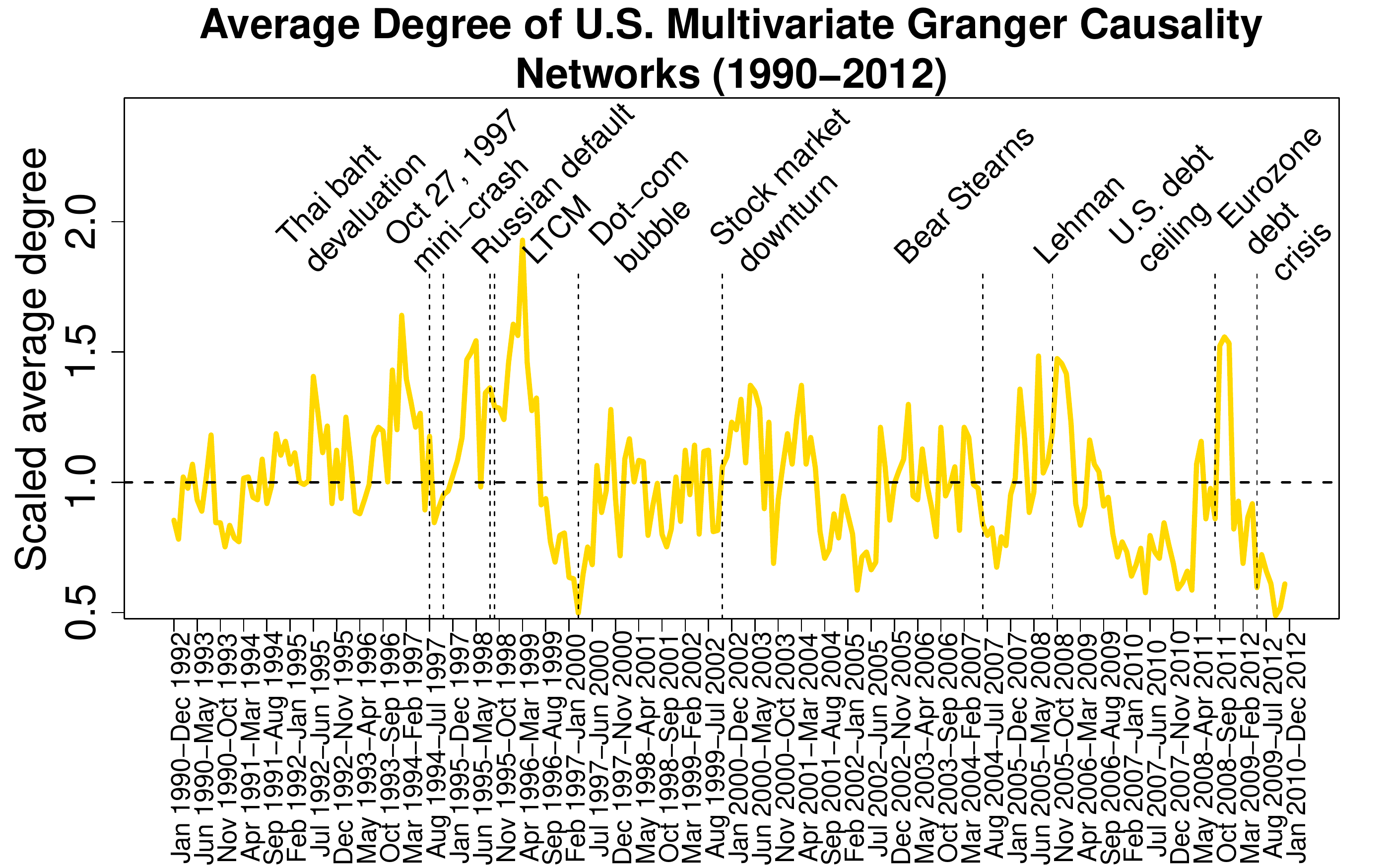}
\end{subfigure}
\vskip\baselineskip
\begin{subfigure}{0.45\textwidth}
    \includegraphics[scale=0.25,angle=-90]{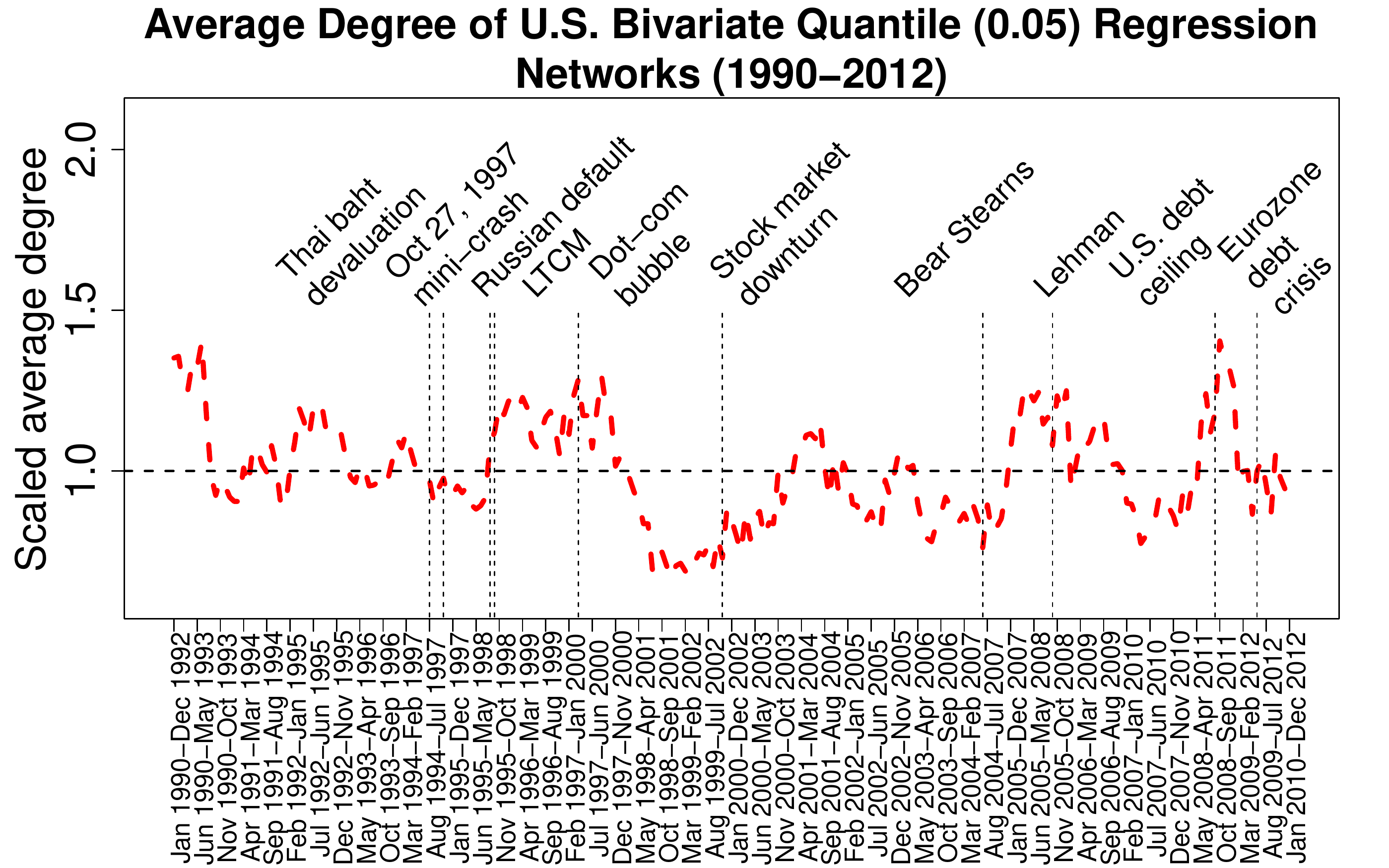}
\end{subfigure}
\begin{subfigure}{0.45\textwidth}
    \includegraphics[scale=0.25,angle=-90]{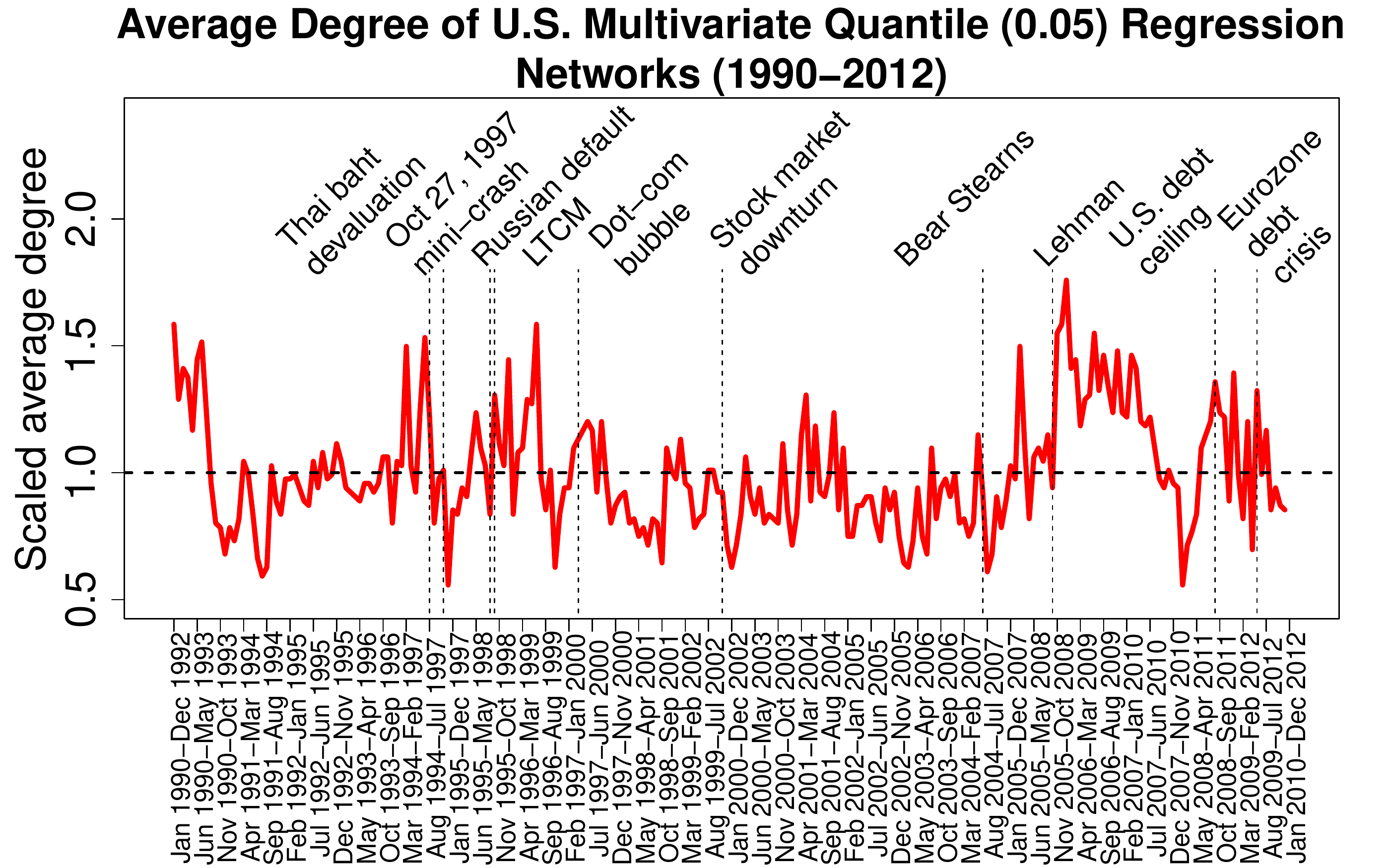}
\end{subfigure}
\vskip\baselineskip
\begin{subfigure}{0.45\textwidth}
    \includegraphics[scale=0.25,angle=-90]{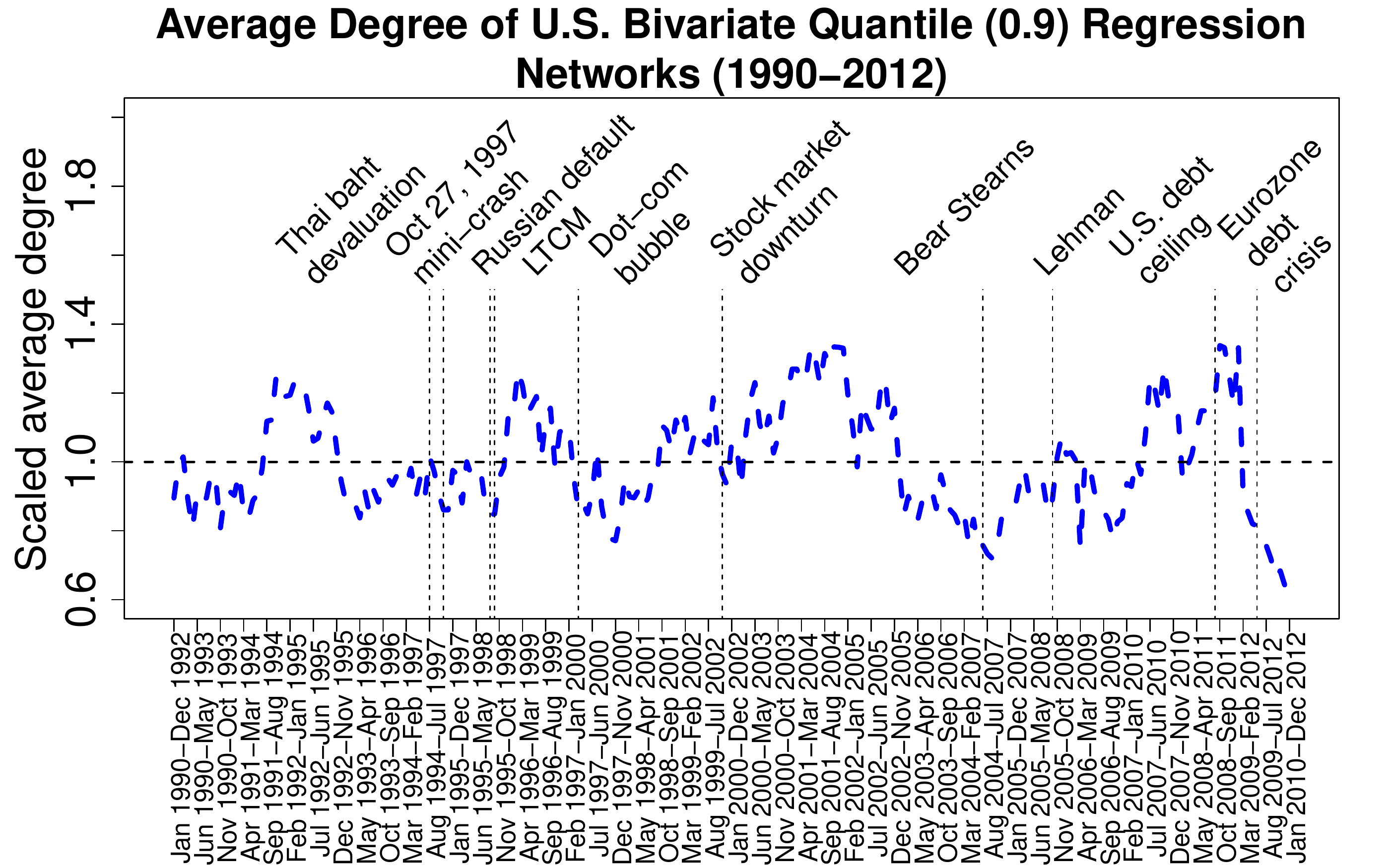}
\end{subfigure}
\begin{subfigure}{0.45\textwidth}
    \includegraphics[scale=0.25,angle=-90]{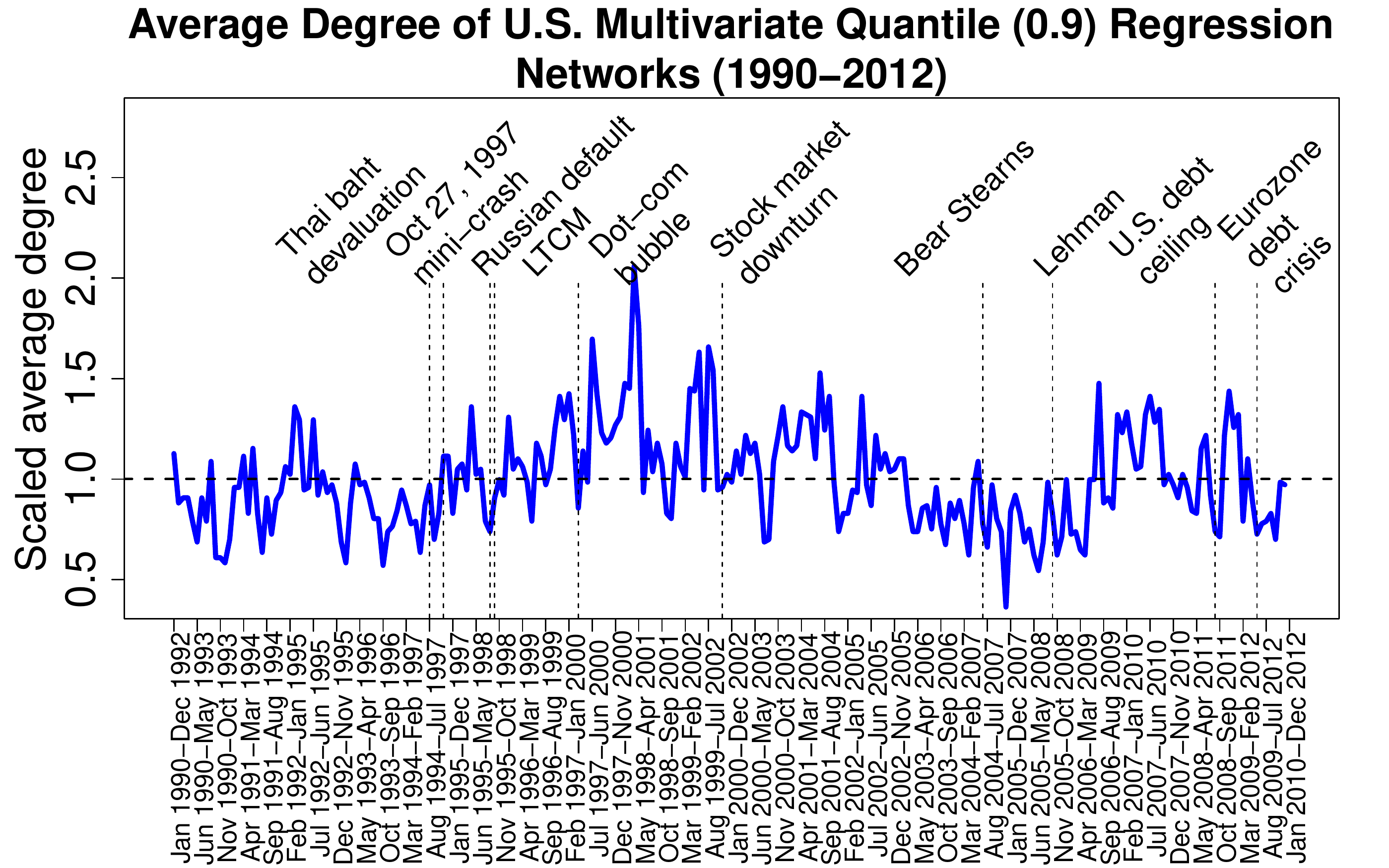}
\end{subfigure}
    \caption{Scaled average degree of networks estimated using GC (top), QR-0.05 (middle), and QR-0.9 (bottom).  [Left]: bivariate results.  [Right]: multivariate results.  The scaled average degree is computed by dividing the mean degree for each network by the historical mean degree (i.e., over the entire sample period).  For GC and QR-0.05, average degree tends to increase before and/or during systemic events.}
    \label{fig:time_series_US_networks}
\end{figure}

\begin{figure}[!t]
    \centering
    \begin{subfigure}{0.45\textwidth}
    \includegraphics[scale=0.22]{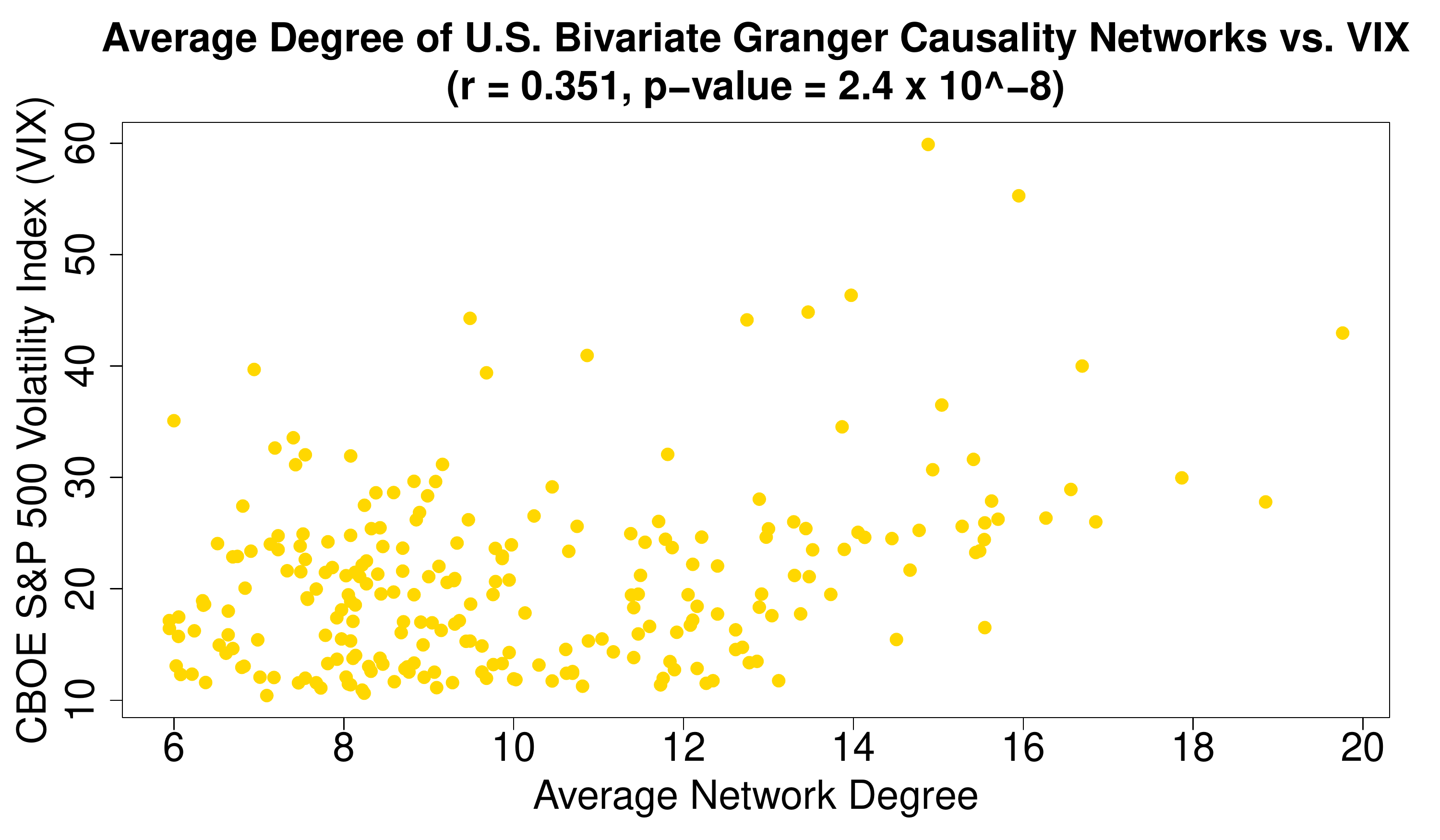}
    \end{subfigure}
    \begin{subfigure}{0.45\textwidth}
        \includegraphics[scale=0.22]{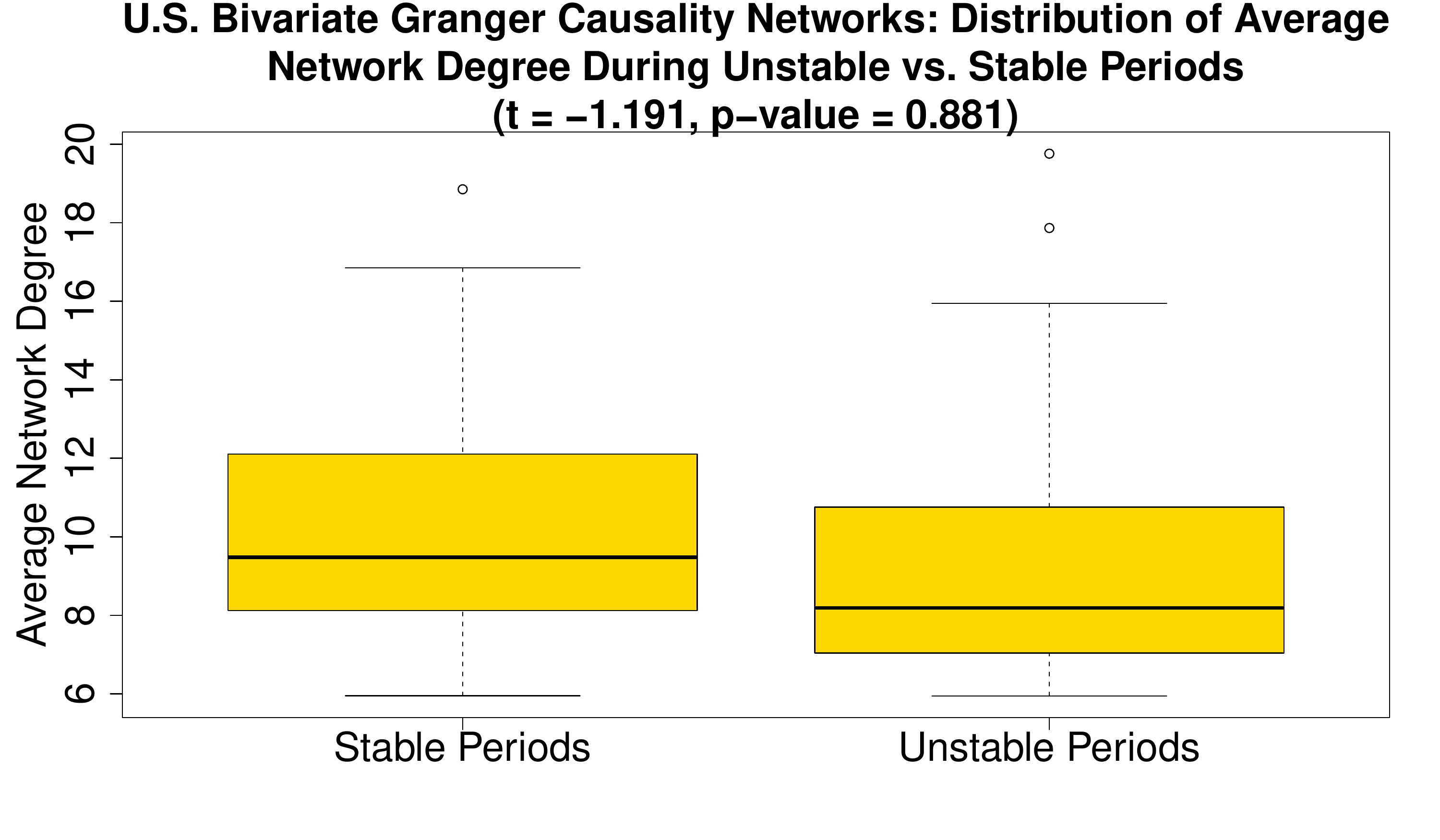}
    \end{subfigure}
    \vskip\baselineskip
    \begin{subfigure}{0.45\textwidth}
        \includegraphics[scale=0.22]{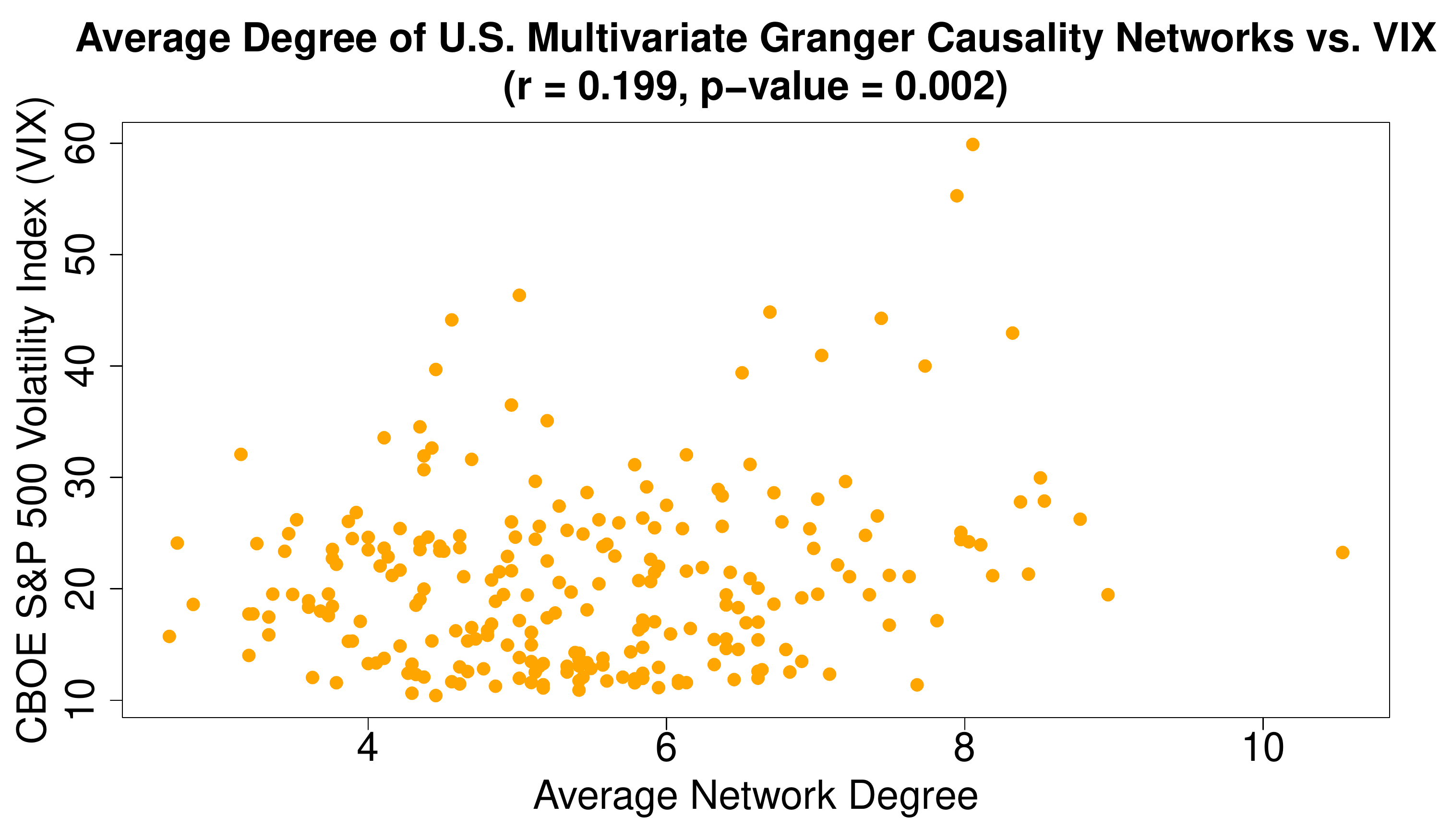}
    \end{subfigure}
    \begin{subfigure}{0.45\textwidth}
        \includegraphics[scale=0.22]{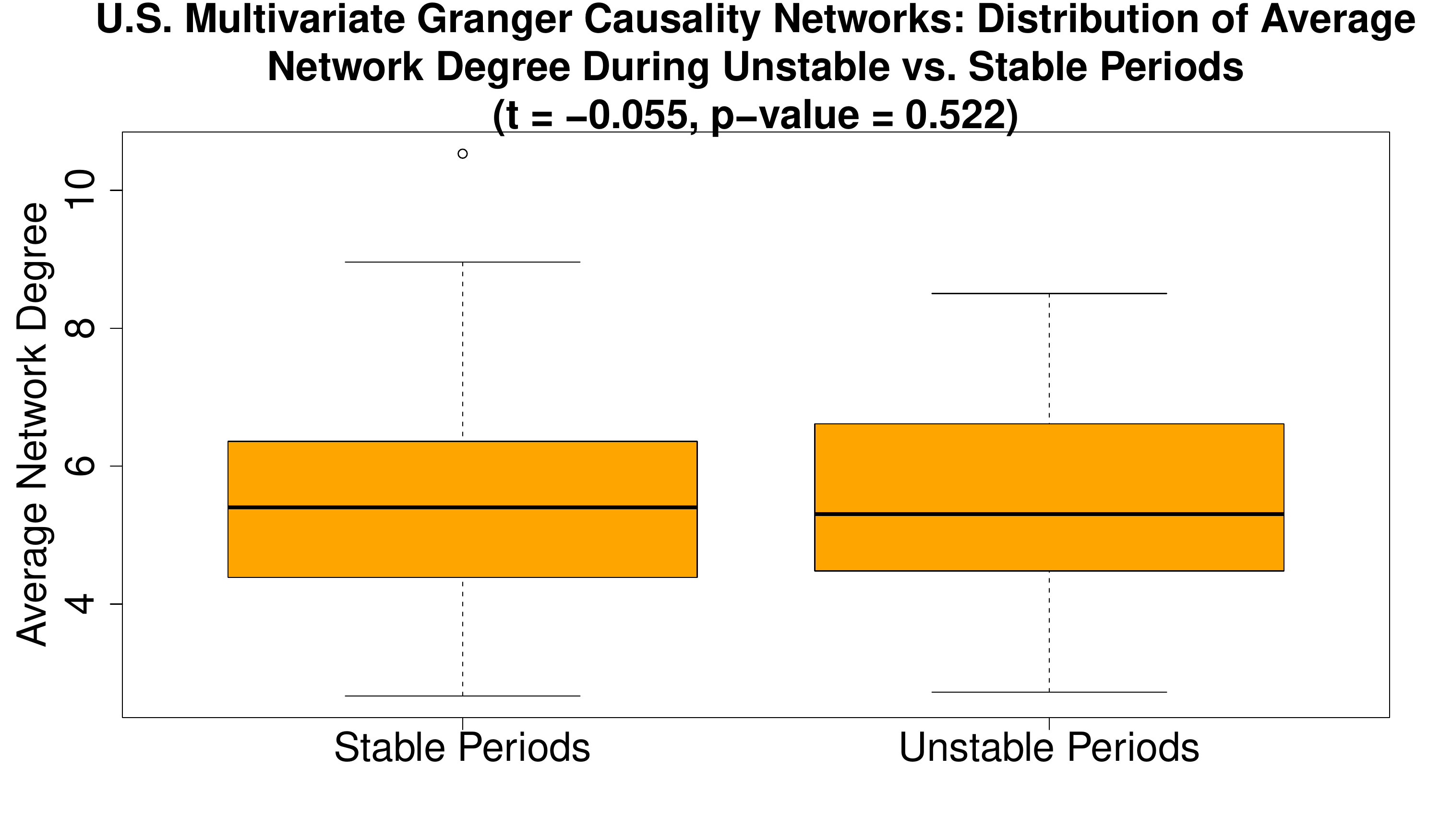}
    \end{subfigure}
     \vskip\baselineskip
    \begin{subfigure}{0.45\textwidth}
        \includegraphics[scale=0.22]{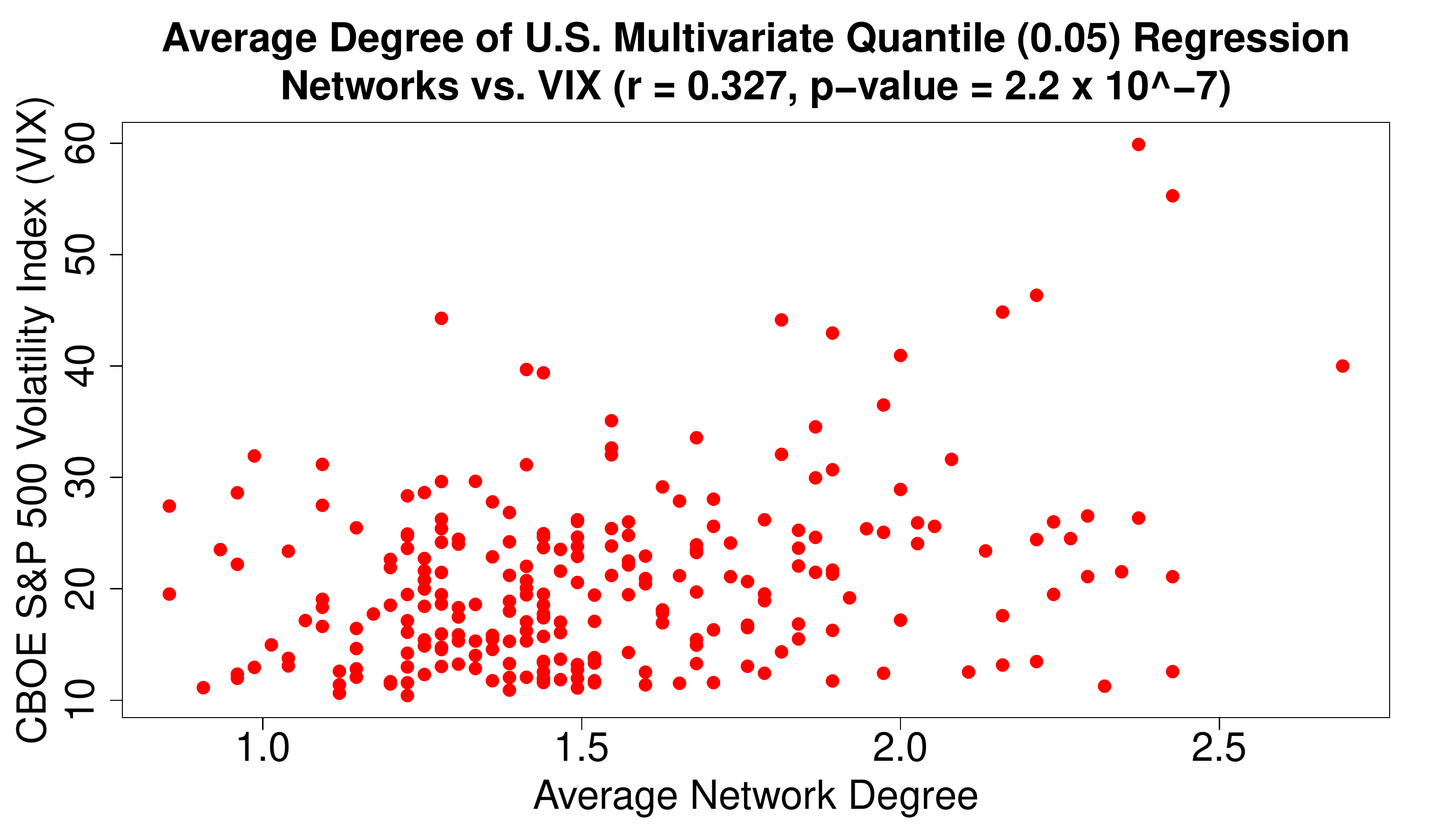}
    \end{subfigure}
    \begin{subfigure}{0.45\textwidth}
        \includegraphics[scale=0.22]{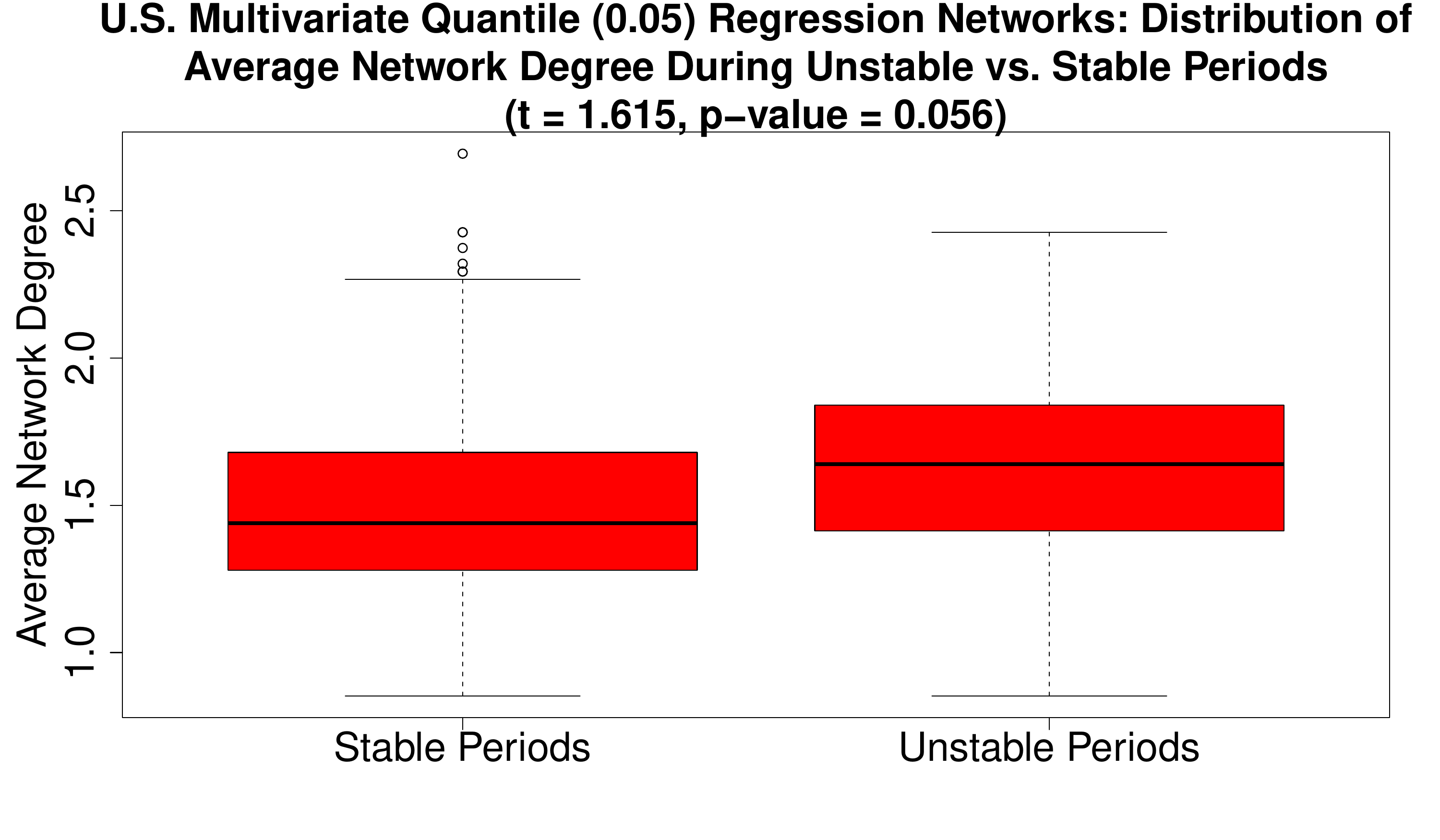}
    \end{subfigure}
    \caption{[Left]: scatterplots of VIX vs. average network degree for each 36-month window over the 1990-2012 sample period.  Correlation coefficients and their corresponding p-values are reported in the plot titles.  [Right]: boxplots displaying the distribution of average network degree during stable vs. unstable periods.  In the plot titles, we report test statistics and p-values for the t-test evaluating whether the (mean) average network degree is higher during unstable periods.  Results are for the estimated bivariate GC (top), multivariate GC (middle), and multivariate QR(0.05) (bottom) networks.
    }
    \label{fig:vix_vs_average_degree}
\end{figure}

\noindent {\textbf{Statistical Benchmarking.}} Next we perform two benchmarking analyses to evaluate whether our estimated networks become denser around systemic events.  First we measure the correlation between network degree and the Chicago Board Options Exchange's S\&P 500 Volatility Index (VIX), which is widely used to measure the expected volatility of the U.S. stock market [\citet{VIX}].  Since VIX is a proxy for investor fear, we would expect it to increase around financial crises.  In the left column of Figure \ref{fig:vix_vs_average_degree}, we plot VIX against the average network degree for each 36-month rolling window between 1990 and 2012.\footnote{The daily VIX time series is obtained from CRSP [\citet{CRSP}].  For each 36-month period, we select the value of VIX on the final trading day of that window.  We then calculate the correlation between these sampled VIX values and the average network degree.}  We display these results for the bivariate and multivariate Granger causality networks and for the multivariate QR-0.05 networks.  Correlation coefficients and their p-values are reported in the plot titles.  We observe that, for each network estimation method, average degree is significantly positively correlated with VIX, indicating that our networks tend to become more dense as investor fear increases.\footnote{We have also calculated the correlation between network degree and the S\&P 500 composite index (i.e., market) return.  Regardless of the network estimation method, we find little association between degree and market return.}  

In the right column of Figure \ref{fig:vix_vs_average_degree}, we compare the distribution of average degree in ``stable'' versus ``unstable'' periods, where, for the purposes of this analysis, we have defined unstable periods to consist of two windows preceding and following each of the systemic events marked in Figure \ref{fig:time_series_US_networks}.  The stable period consists of all other windows.  We also report (in the plot titles) the test statistic and p-value corresponding to the t-test which evaluates whether the mean network degree is higher (on average) during unstable periods.  In the case of the bivariate and multivariate Granger causality networks, we do not observe a statistically significant increase in network connectivity around systemic events (p-values are 0.881 and 0.522, respectively).  However, multivariate QR(0.05) networks \textit{do} display a statistically significant increase in average degree during the unstable periods ($t = 1.615$, p-value = 0.056).

\begin{landscape}
\begin{figure}[p]
\centering
\begin{subfigure}{0.4\textwidth}
    \centering
    \includegraphics[scale=0.33,angle=-90,trim = {0cm 3cm 2.4cm 2cm},clip]{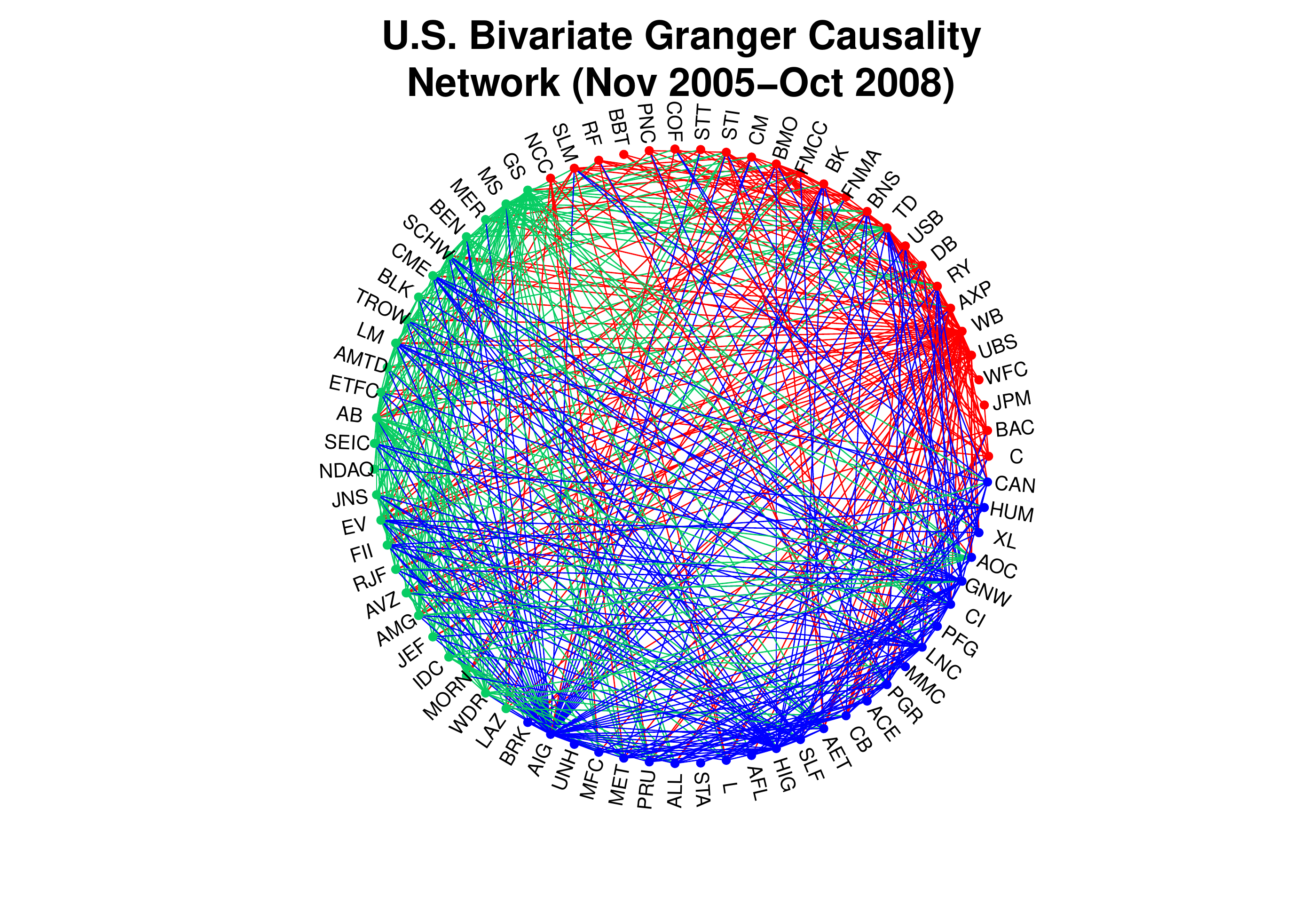}
\end{subfigure}
\begin{subfigure}{0.4\textwidth}
    \centering
    \includegraphics[scale=0.33,angle=-90,trim = {0cm 3cm 2.4cm 2cm},clip]{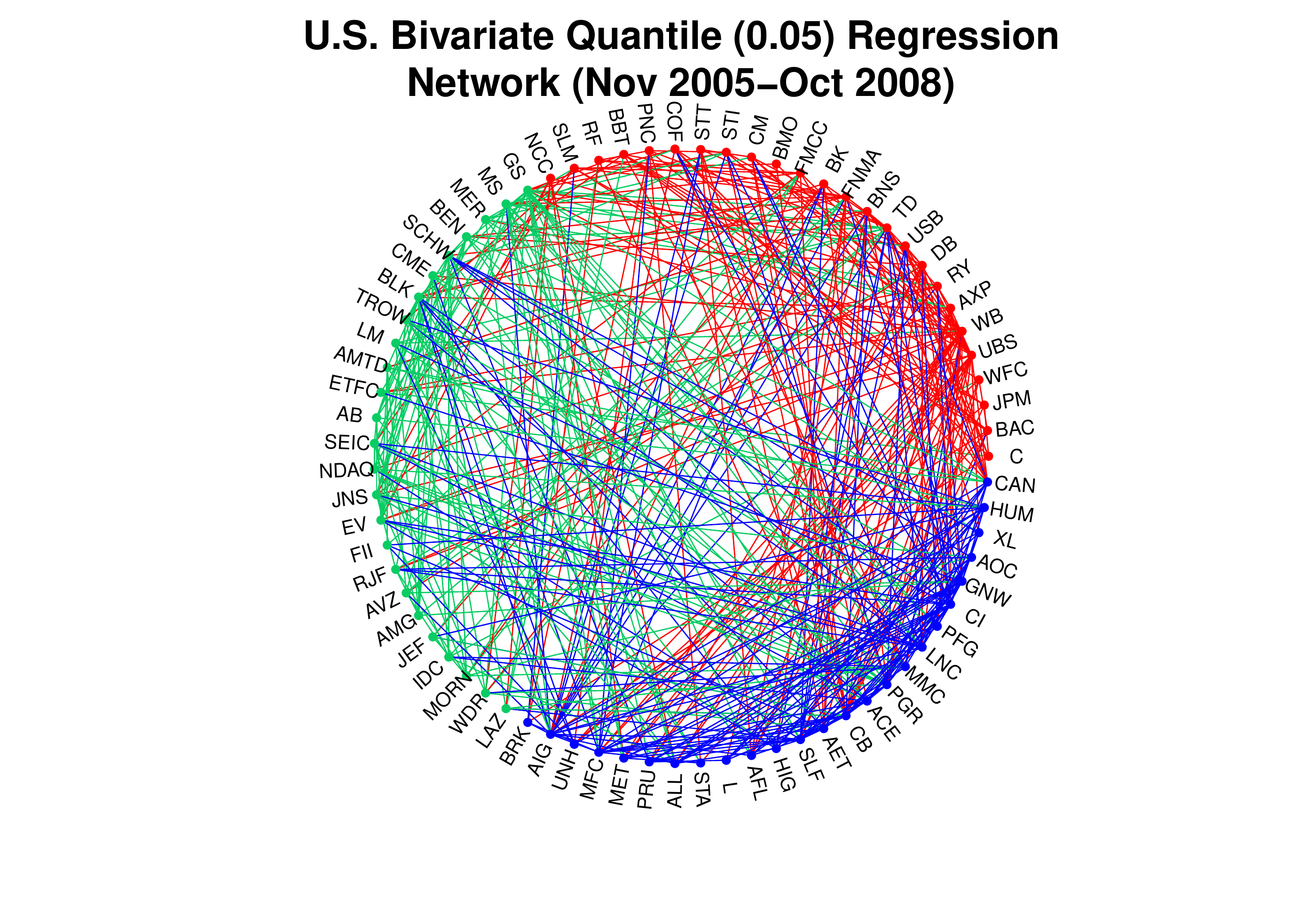}
\end{subfigure}
\begin{subfigure}{0.4\textwidth}
    \centering
    \includegraphics[scale=0.33,angle=-90,trim = {0cm 3cm 2.4cm 2cm},clip]{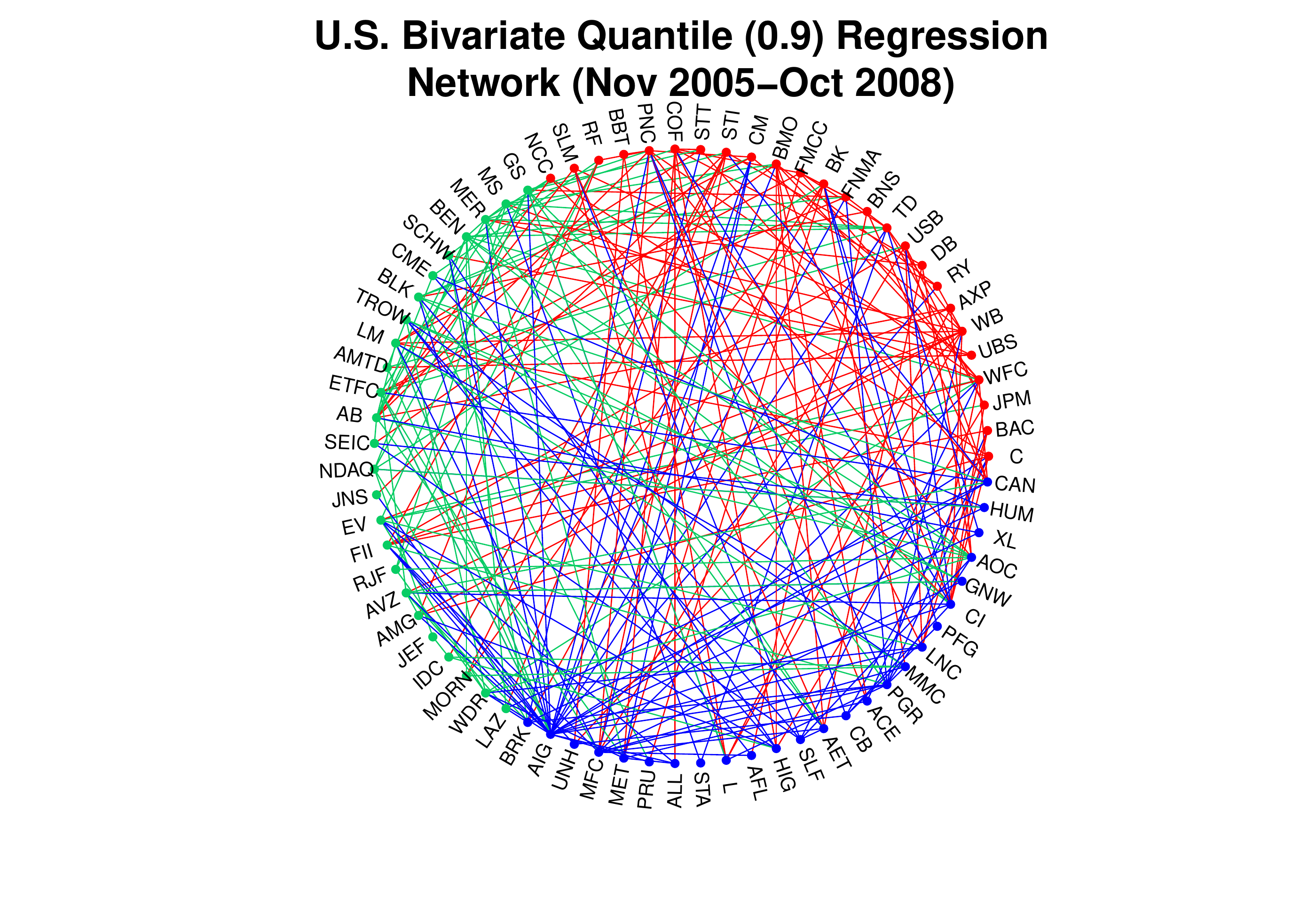}
\end{subfigure}
\vskip\baselineskip
\begin{subfigure}{0.4\textwidth}
    \centering
    \includegraphics[scale=0.33,angle=-90,trim = {0cm 3cm 2.4cm 1.8cm},clip]{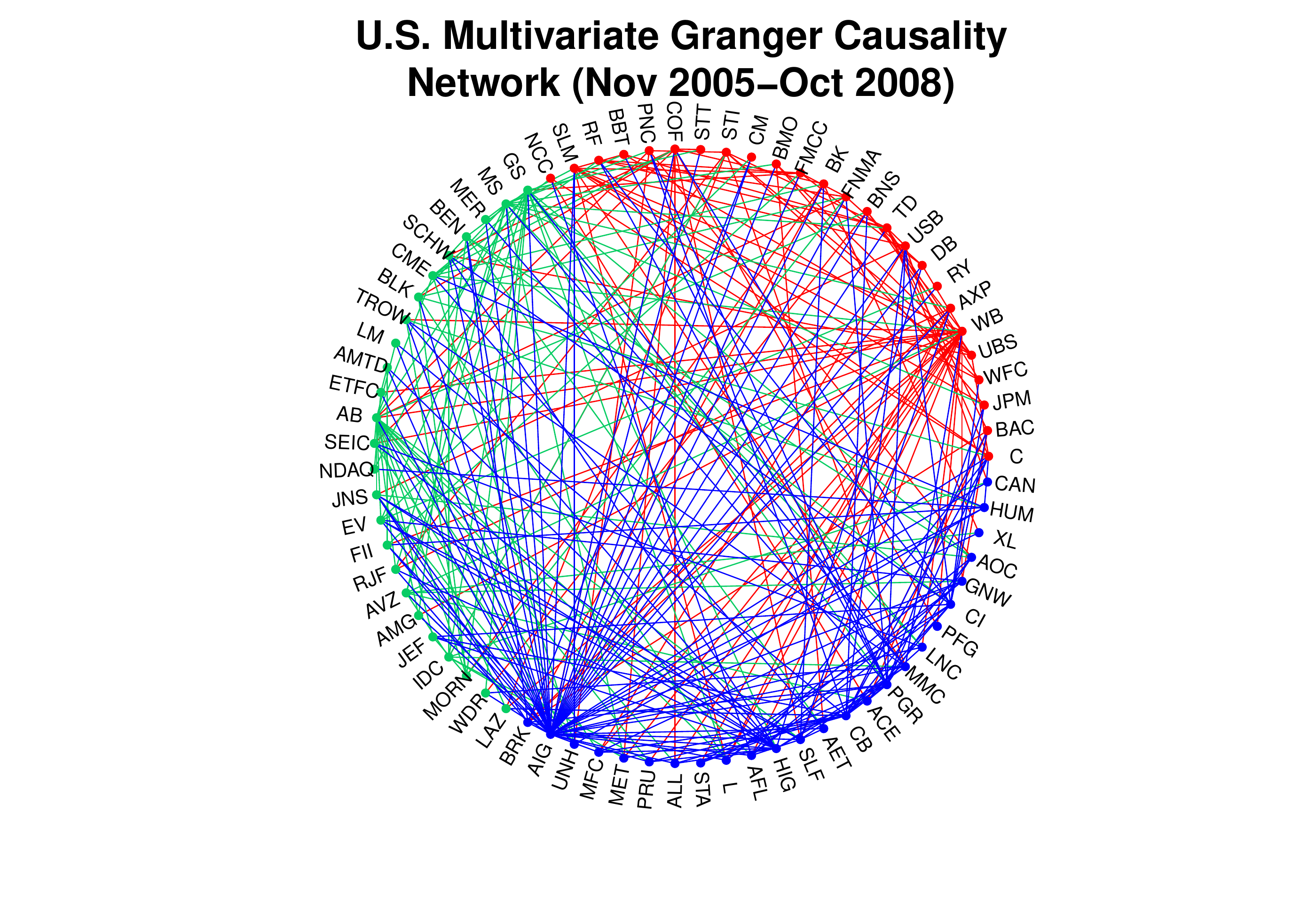}
\end{subfigure}
\begin{subfigure}{0.4\textwidth}
    \centering
    \includegraphics[scale=0.33,angle=-90,trim = {0cm 3cm 2.4cm 1.8cm},clip]{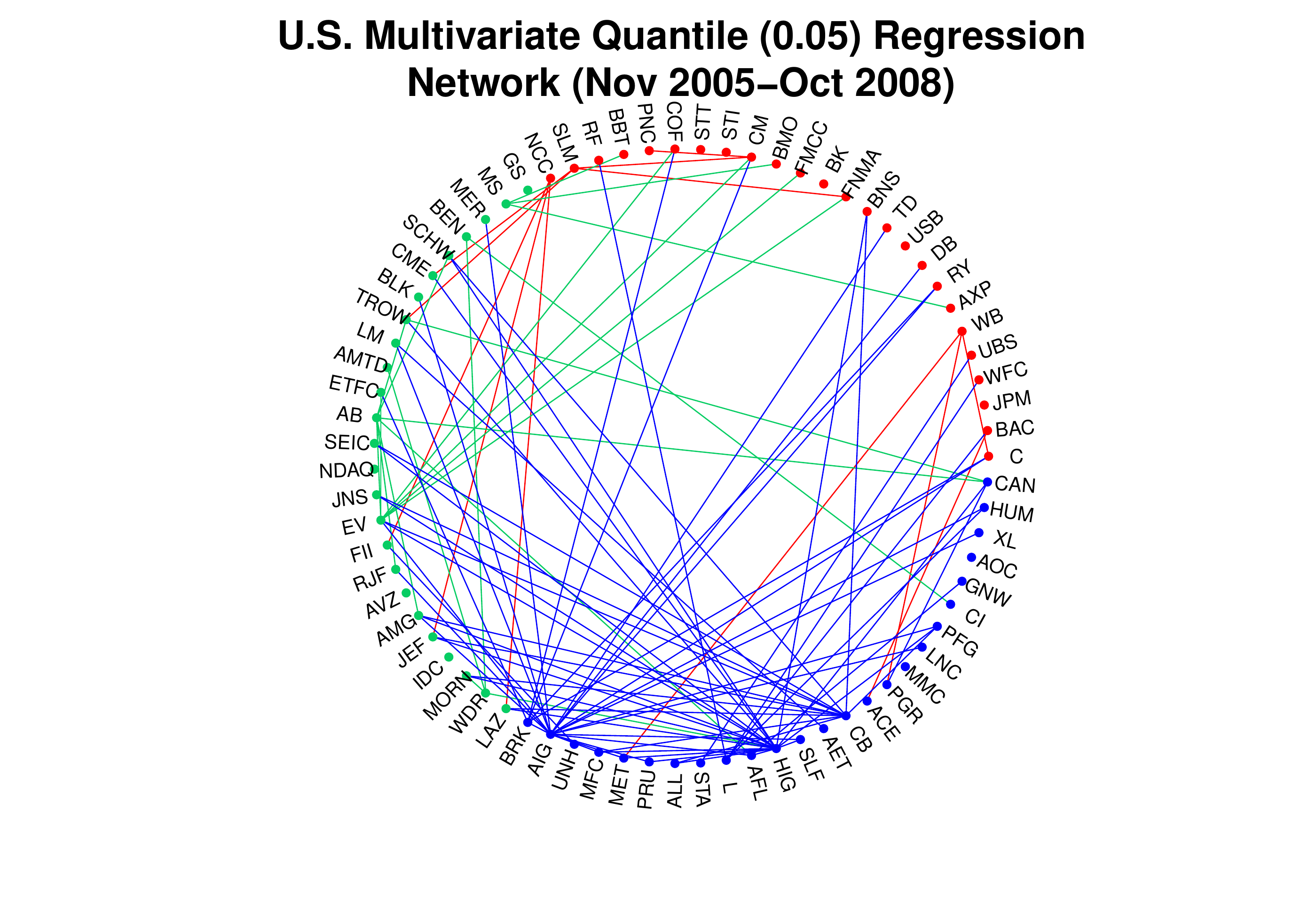}
\end{subfigure}
\begin{subfigure}{0.4\textwidth}
    \centering
    \includegraphics[scale=0.33,angle=-90,trim = {0cm 3cm 2.4cm 1.8cm},clip]{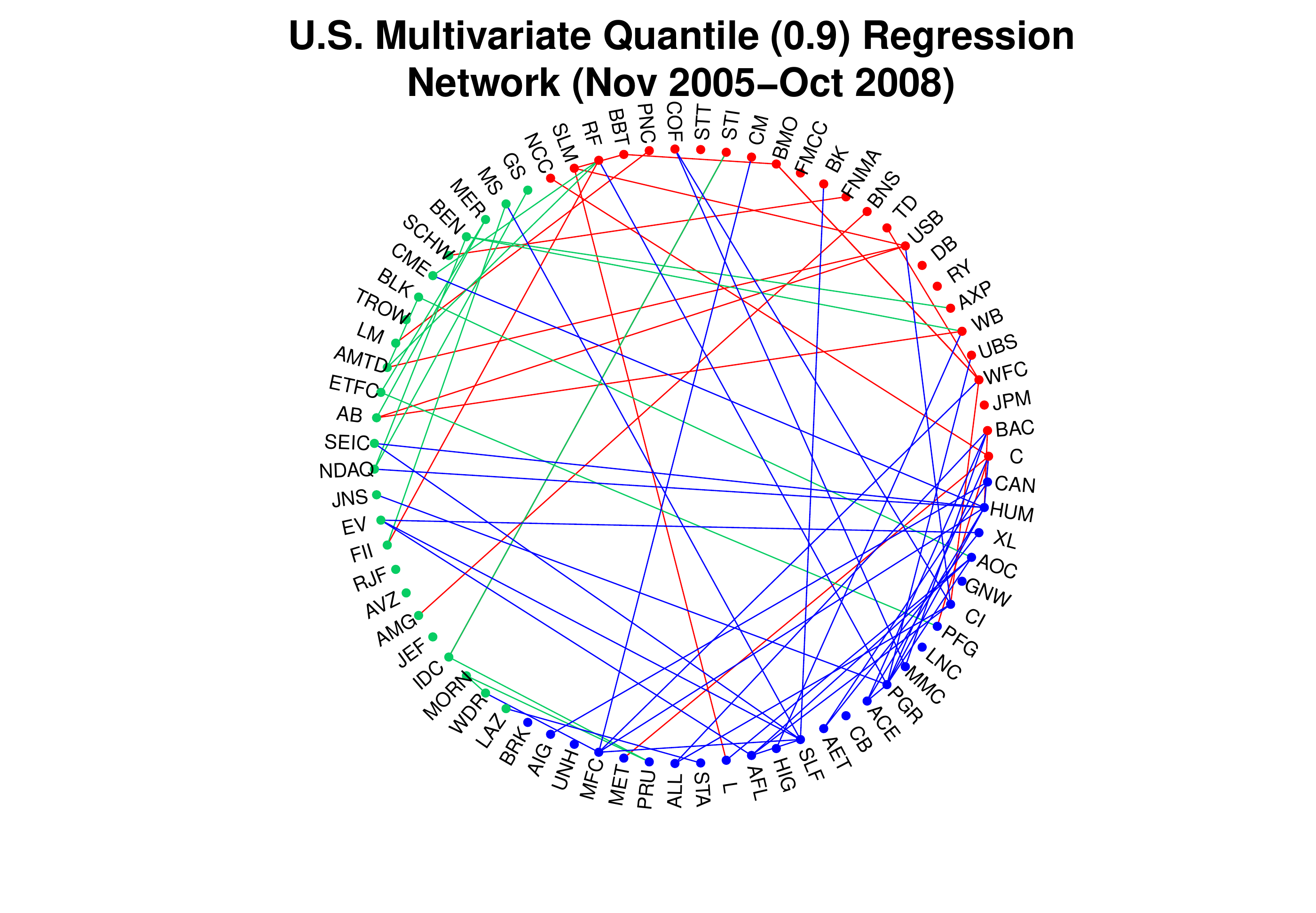}
\end{subfigure}
\caption{For November 2005-October 2008, networks estimated using GC (left), QR-0.05 (middle), and QR-0.9 (right).  Top row: bivariate results; bottom row: multivariate results.  Nodes are colored according to firm sector, with red (resp. green, blue) denoting banks (resp. broker-dealers, insurance companies).  Edges are colored according to the sector of the outgoing node.  AIG is highly connected in both the GC and QR-0.05 multivariate networks.  See Table \ref{table: QGC ticker symbol table} for the firm name corresponding to each ticker symbol.}
\label{fig:nov_05_oct_08_ntwks}
\end{figure}
\end{landscape}

\subsection{Central Firms in the U.S. Financial Sector (2007-2009)}

Having seen that our network models can perceive financial crisis periods ex-post, we now turn our attention to a different question: whether we can identify firms that play an important role in these crises.  Firms having a high degree (i.e. those that influence many other institutions) may be likely to disseminate risk through the system.  Therefore, we will concentrate our attention on the most highly connected firms in the network.

In Figure \ref{fig:nov_05_oct_08_ntwks}, we display the estimated financial networks for the November 2005 - October 2008 period, during which time the investment banking company Lehman Brothers collapsed.  Nodes are colored according to their sector: red for banks, green for broker/dealers, and blue for insurance companies.  Edge color corresponds to the sector of the node at which the edge originates; for example, if Bank of America (bank) is found to Granger-cause MetLife (insurance company), the associated edge will be colored red.  We have less power to detect relationships in the tails of the distribution, so from a statistical standpoint, we would expect the bivariate and multivariate QR-0.05 and QR-0.9 networks to have fewer edges than the corresponding GC networks.  This is the case for the multivariate networks and for bivariate QR-0.9, which has 235 edges compared to bivariate GC's 363 edges.  However, the opposite pattern holds for bivariate QR-0.05, which has 404 edges (more than GC's).  This suggests that perhaps by focusing on firms' worst returns, we can uncover connections that are not visible otherwise.  Compared to these bivariate networks, the multivariate versions are considerably less dense, as expected due to penalization.

Both bivariate and multivariate quantile regression with $\tau = 0.05$ identify many firms that are known to have played a major role in the U.S. Financial Crisis.

In Figure \ref{fig: top_firms_Nov_05_Oct_08}, we display ticker symbols for the 10 most well-connected institutions during every other 36-month rolling window from May 2007 to March 2010.  In the multivariate case, we see that American International Group (AIG) is the most well-connected firm through much of 2008-2010.  The Federal Home Loan Mortgage Corporation (Freddie Mac) and The Federal National Mortgage Association (Fannie Mae) are also highly connected in early- to mid-2008.  All 3 institutions were key drivers of the crisis, Freddie Mac and Fannie Mae through their dealings in subprime mortgages and AIG through its over-reliance on credit default swaps [\citet{causesofcrisis}].  Other highly connected firms include Goldman Sachs, Morgan Stanley, and The Hartford, all of which received major federal bailout packages during the crisis [\citet{TARP}]. 

\begin{figure}[h]
    \centering
    \includegraphics[scale=0.28,angle=-90]{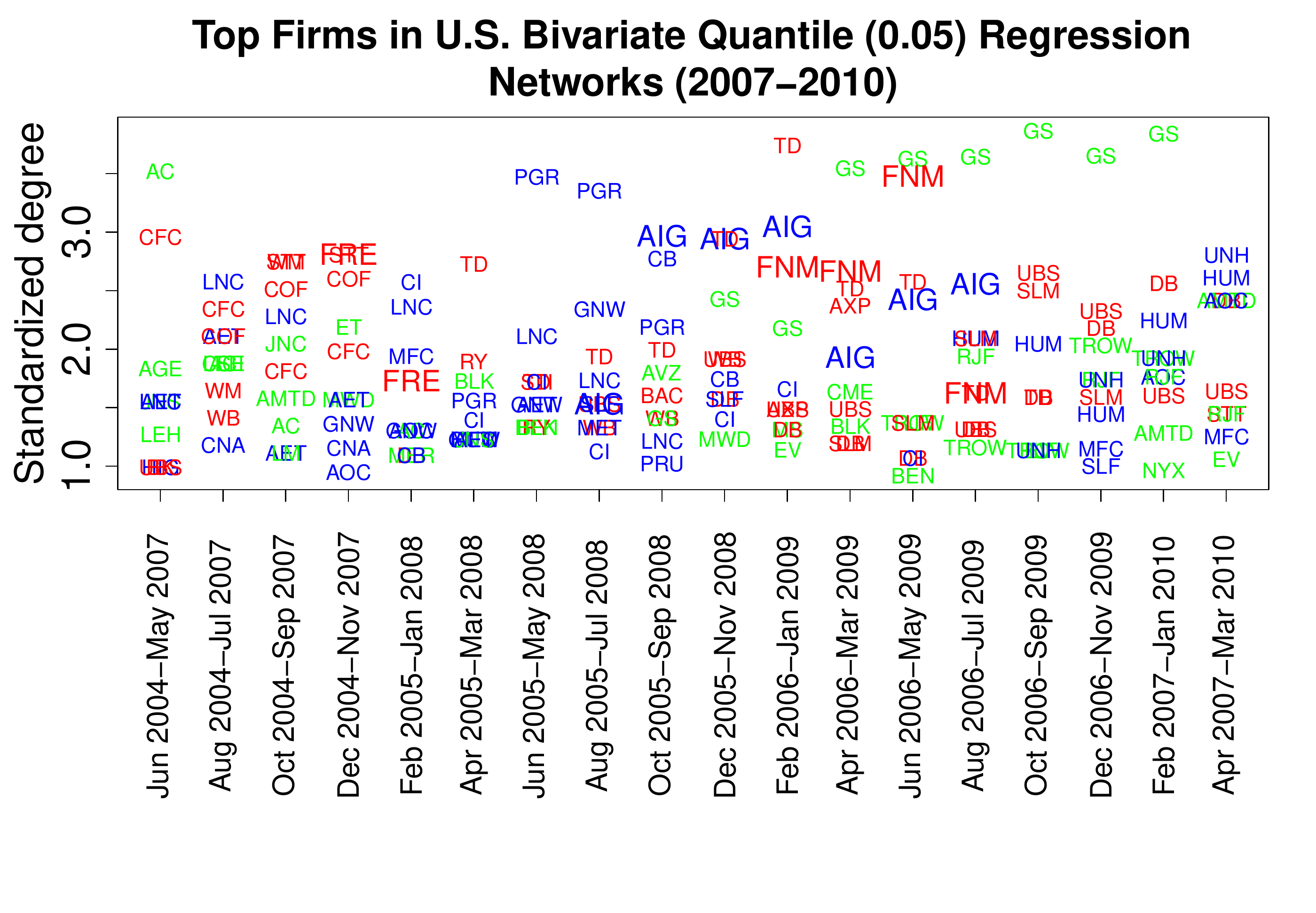}
    \includegraphics[scale=0.28,angle=-90]{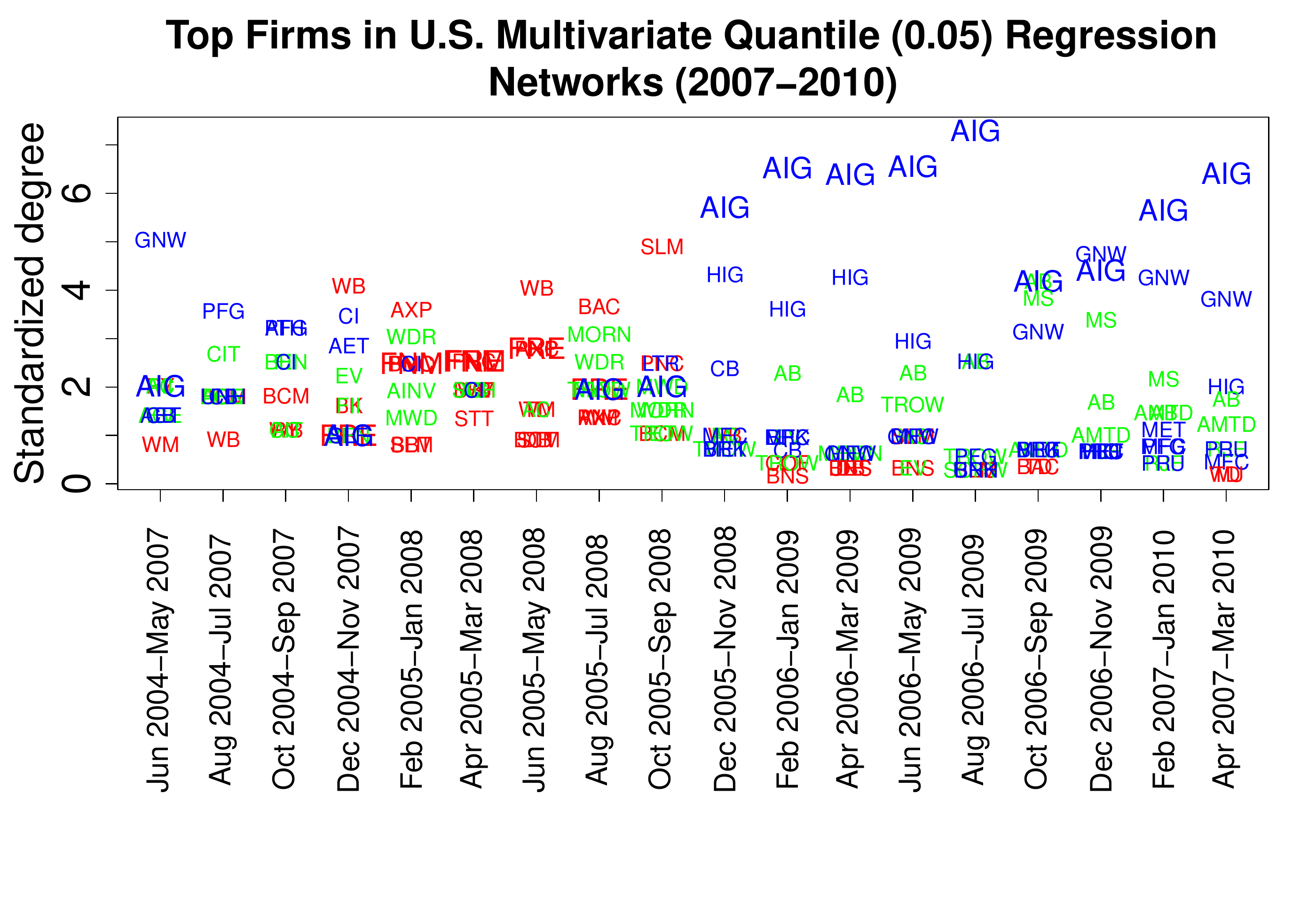}
    \caption{The top 10 firms, as ranked by their (standardized) degree, for every other window from May 2007 to March 2010.  Results from bivariate (resp. multivariate) quantile regression with $\tau = 0.05$ are presented on the left (resp. right).  Standardized degree is computed by subtracting off the average degree of all firms in the network and dividing by the standard deviation of these degrees.  Many firms that we know played a major role in the U.S. Financial Crisis appear, including Freddie Mac (FRE), Fannie Mae (FNM), and American International Group (AIG).  See Table \ref{table: QGC ticker symbol table} for the firm name corresponding to each ticker symbol.}
    \label{fig: top_firms_Nov_05_Oct_08}
\end{figure}

\begin{table}[h]
\centering
\setlength{\tabcolsep}{7pt}
\begin{tabular}{c|cccccc}
\toprule
& \multicolumn{3}{c}{\textbf{BIVARIATE}} & \multicolumn{3}{c}{\textbf{MULTIVARIATE}} \\ 
\textbf{Rank} & \textbf{GC} & \textbf{QR 0.05} & \textbf{QR 0.9} & \textbf{GC} & \textbf{QR 0.05} & \textbf{QR 0.9} 
\\ \midrule
1 & AIG (58) & AIG (26) & AIG (29) & AIG (47) & AIG (20) & HUM (8) \\
2 & WB (45) & CB (26) & CI (14) & WB (24) & HIG (20) & PGR (7) \\
3 & AC (36) & GS (25) & MFC (13) & AC (22) & CB (12) & SLF (6) \\
4 & MS (36) & UBS (25) & AC (12) & PGR (16) & EV (9) & CI (5) \\
5 & HIG (32) & SLF (24) & PNC (12) & CB (16) & AC (7) & MFC (5) \\
6 & TD (31) & TD (24) & WB (12) & HIG (16) & BRK (6) & C (5) \\
7 & GS (27) & CI (23) & WFC (12) & MS (16) & CNA (4) & AON (4) \\
8 & GNW (26) & WB (23) & PGR (12) & GS (16) & AFL (4) & AFL (4) \\
9 & LNC (25) & PGR (23) & TD (11) & CI (13) & L (4) & RF (4) \\
10 & EV (24) & GNW (22) & AON (11) & SLM (12) & WDR (4) & USB (4) \\
\bottomrule
\end{tabular}
\caption{\label{table: top_firms_Nov_05_Oct_08} The top 10 firms, as ranked by their degree, for November 2005-October 2008.  The left 3 columns present bivariate results and the right 3, their multivariate analogues.  Observe that both bivariate and multivariate mean-based and lower tail analyses highlight the role of American International Group (AIG), which is known to have played a pivotal role in the U.S. Financial Crisis.  See Table \ref{table: QGC ticker symbol table} for the firm name corresponding to each ticker symbol.}
\end{table}

\section{Empirical Results: Indian Banks}
\label{sec:India_empirical_section}

Having estimated financial networks based on historical U.S. data, we now turn our attention to performing a similar analysis on the stock returns of Indian banks.  In Section \ref{sec:US_empirical_section}, we focused on how lower-tail analysis can detect systemic risk.  In this section, on the other hand, we explore how different economic events are highlighted by different quantiles; that is, we observe that lower-tail networks tend to display increased connectivity when the markets receive potentially ``bad news'' while upper-tail networks exhibit the same around announcements of potentially ``good news.'' \\

\smallskip
\noindent {\textbf{Data Collection and Pre-processing. }}
We select the top 30 Indian banks, as defined by the Thomson Reuters Industry Classification Benchmark, again measuring firm size by market capitalization. We then use the banks' monthly stock returns to estimate a network for each 36-month rolling window from December 2000 through January 2018. Networks are estimated using both multivariate GC and QGC with $\tau = 0.2$ (lower tail analysis) and $\tau = 0.8$ (upper tail analysis).
 
Lastly, as in Section \ref{sec:US_empirical_section}, we measure connectivity via the average degree of the undirected networks.

\subsection{Evolution of Network Summary Statistics}
In Figure \ref{fig:india_bank_time_series_avg_degree}, we plot the average degree of the estimated India bank networks over the entire sample period.  These time series illustrate how quantile-based analysis can make the results of Granger causality analysis more interpretable.  For example, in networks estimated using Granger causality, we see a large and persistent increase in average degree from early 2006 through mid-2008; in fact, GC connectivity does not reach these levels during any other portion of the sample period.  When we perform upper and lower tail analyses (focusing on good and bad days, respectively), we see that the QR 0.8 networks also display increased connectivity during 2006-2008, while the pattern is mixed for the QR 0.2 networks.  (These have high average degree throughout 2006, but then connectivity drops and we do not see a persistent increase in degree thereafter.)  Thus it appears that most of the connections in the 2006-2008 Granger causality networks are driven by what happens on good days in the market.

Similarly, the average degree spikes in both Granger causality and QR 0.8 networks in late 2012 and in September 2013 (see Figure \ref{fig:interesting_ntwks_India}).  The latter is particularly interesting since Narendra Modi was named as the prime ministerial candidate of the Bharatiya Janata party (BJP) on September 13, 2013 [\citet{modiElection}].  Upper tail connectivity remains high for several months after the announcement, perhaps reflecting shared confidence in the economy.  (Noteably lower tail networks have low average degree during this period.)  The average degrees of GC and QR 0.8 networks also increase in early 2015 and remain elevated through mid-2016.  Lastly, starting in early 2016 and continuing through April 2017, the QR 0.2 networks display a large increase in average degree, possibly due to rumors of and later actual banknote demonetisation, which occurred in November 2016 (see Figure \ref{fig:interesting_ntwks_India}) [\citet{demonetisation}].  Importantly, neither of these connectivity spikes is clearly captured by Granger causality or upper tail networks, which experience only a small connectivity increase \textit{after} demonetisation.  This demonstrates that (a) upper and lower quantile analysis can reveal economic patterns not highlighted by Granger causality, and (b) by varying the quantile under consideration, we can detect different types of economic events.

\begin{figure}[h]
    \centering
    \includegraphics[scale=0.48,angle=-90]{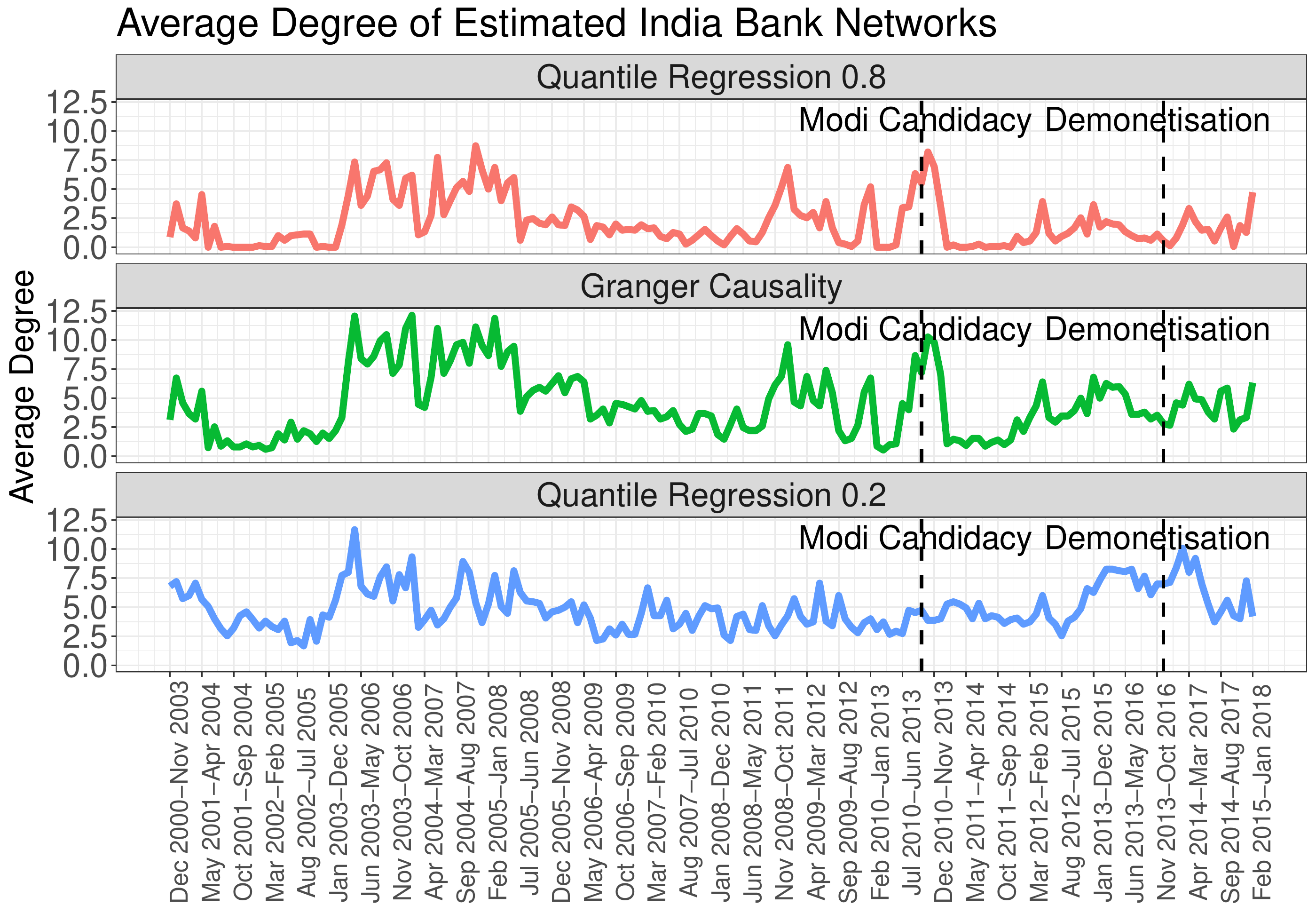}
    \caption{Average degree of undirected networks for the top 30 Indian banks.  Networks were estimated using Granger causality (middle) and quantile regression with $\tau = 0.8$ (top) and $\tau = 0.2$ (bottom).  The announcement of Narendra Modi as a prime ministerial candidate (September 2013) and Indian banknote demonetisation (November 2016) are marked with dashed lines.}
    \label{fig:india_bank_time_series_avg_degree}
\end{figure}

\begin{figure}[h]
    \centering
    \includegraphics[scale=0.33,angle=-90,trim = {0cm 4cm 2cm 2cm}, clip]{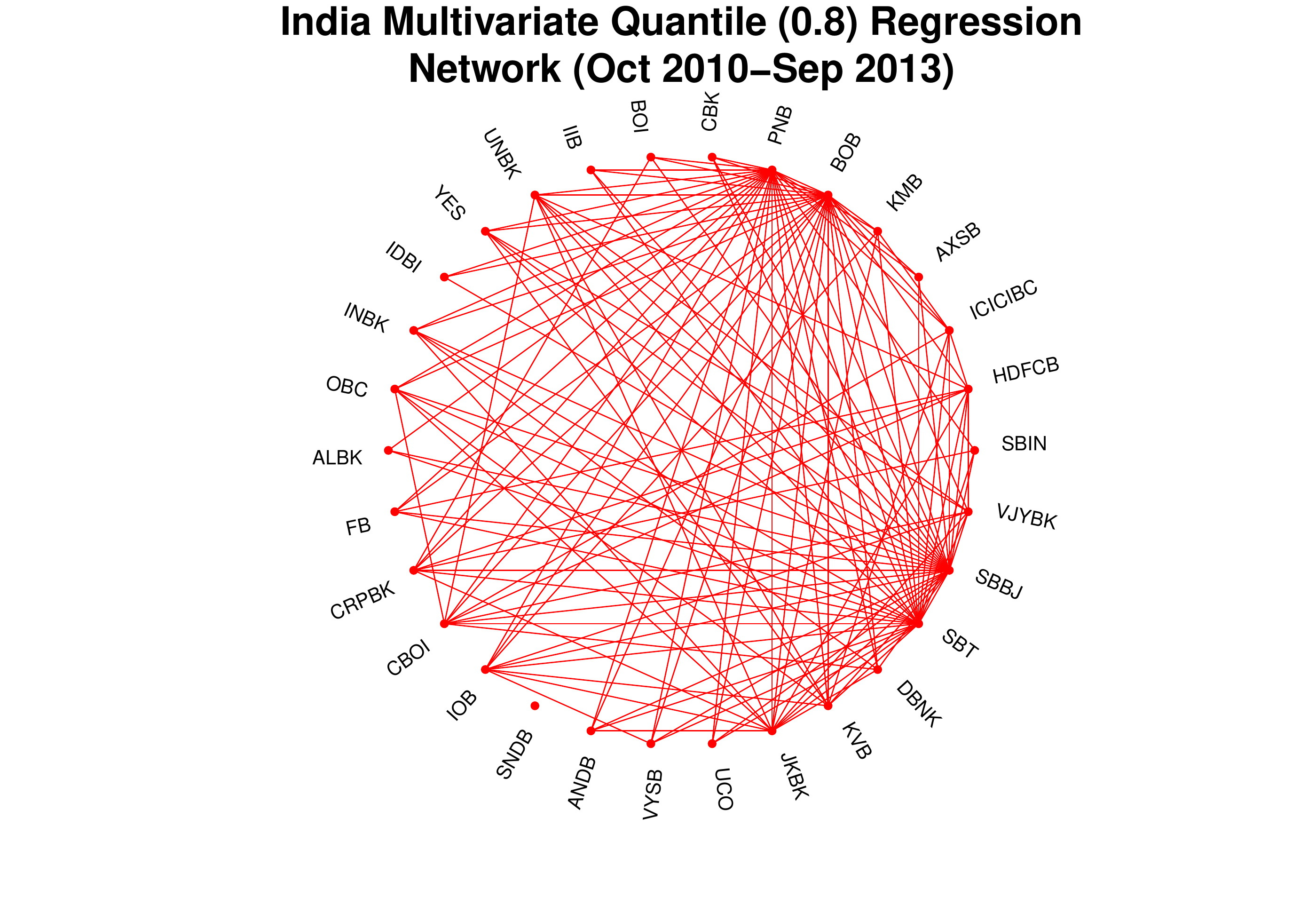}
    \includegraphics[scale=0.33,angle=-90,trim = {0cm 3cm 2cm 1cm}, clip]{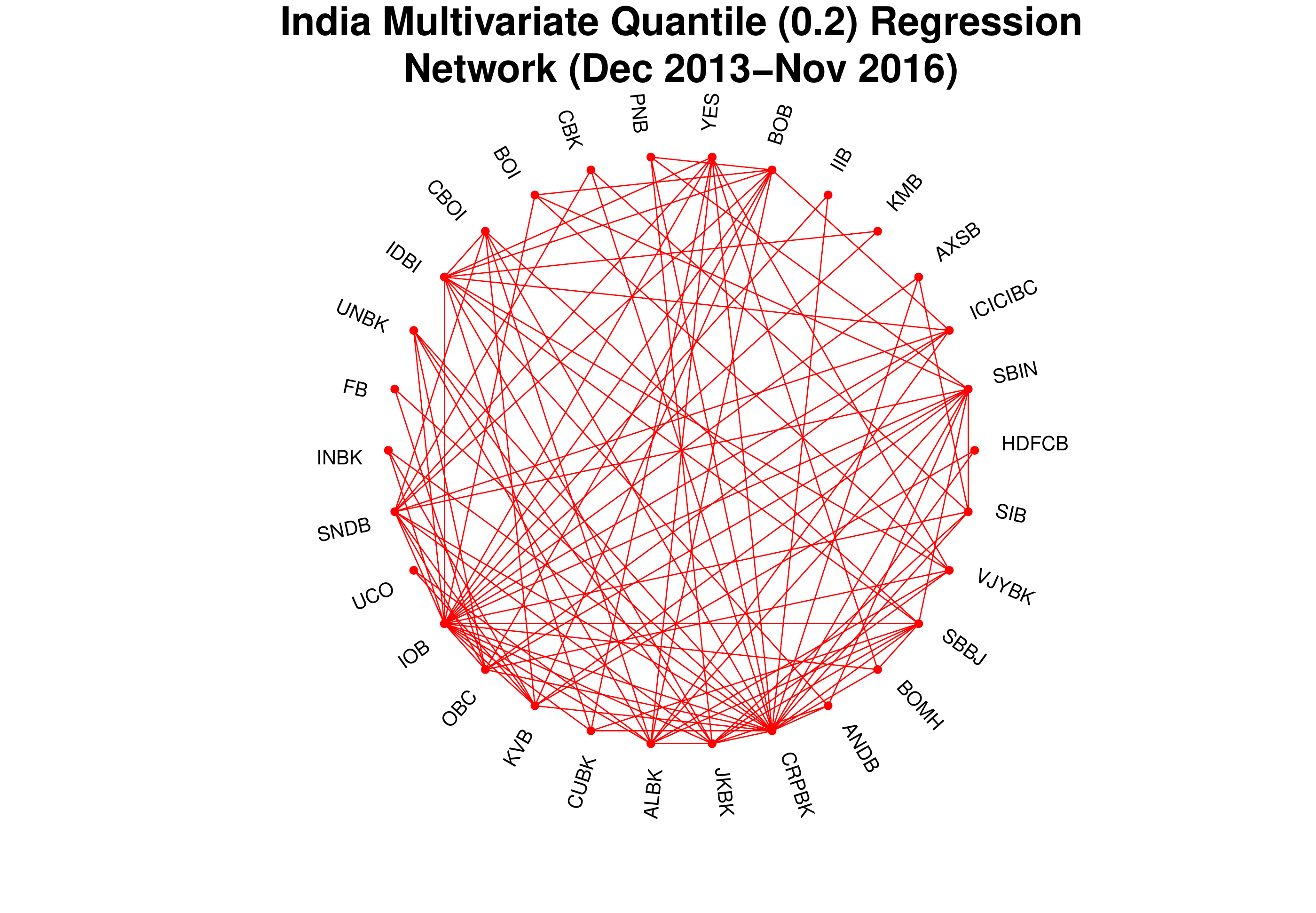}
    \caption{Estimated India bank networks for October 2010-September 2013 (left) and December 2013-November 2016 (right).  The former time period culminated in the announcement of Narendra Modi as a prime ministerial candidate and the latter time period culminated in Indian banknote demonetisation.  See Table \ref{table: QGC ticker symbol table for India} for the full firm name corresponding to each ticker symbol.}
    \label{fig:interesting_ntwks_India}
\end{figure}
\section{Conclusion}
Systemic risk is a complex economic concept for which no one metric is sufficient.  Multiple risk measures are needed to provide financial regulators with the tools required to implement sound policies.  We have proposed one such measure that yields quantile-based networks.  In contrast to pairwise approaches, our  method conditions on all firms in the sample to avoid the appearance of spurious links.  We apply our risk measure to historical returns of large U.S. and Indian firms and study the connectivity of the resulting networks.  We find that our lower tail method detects financial crises and identifies systemically-important firms, like AIG and Fannie Mae for the US, that are not highlighted at other quantiles.  

Future work in this direction includes developing the theory of lasso penalized quantile regression on multivariate time series data to enable easier tuning parameter selection.  In particular, it will be useful to have uncertainty measures associated with each edge estimate.  Lastly, we may be able to reduce the dimensionality of our problem by exploiting sectoral information; for instance, we could apply group lasso, where each group corresponds to a different sector or sub-sector, in order to reduce the number of parameters we must estimate.

\section*{Acknowledgement}

SB gratefully acknowledges support from NSF awards DMS-1812128, DMS-2210675, and NIH awards  R01GM135926 and R21NS120227.  We also thank David Easley, Stephen P. Ellner, and Steven Strogatz for their helpful comments.

\bibliographystyle{abbrvnat}

\bibliography{dsmmbib}

\begin{thebibliography}{27}
\providecommand{\natexlab}[1]{#1}
\providecommand{\url}[1]{\texttt{#1}}
\expandafter\ifx\csname urlstyle\endcsname\relax
  \providecommand{\doi}[1]{doi: #1}\else
  \providecommand{\doi}{doi: \begingroup \urlstyle{rm}\Url}\fi

\bibitem[VIX()]{VIX}
Cboe vix.
\newblock \url{http://www.cboe.com/vix}.
\newblock Accessed: 2020-08-03.

\bibitem[dem()]{demonetisation}
India scraps 500 and 1,000 rupee bank notes overnight.
\newblock \url{https://www.bbc.com/news/business-37906742}.
\newblock Accessed: 2020-07-02.

\bibitem[mod()]{modiElection}
India polls: Narendra modi revealed as bjp's pm candidate.
\newblock \url{https://www.bbc.com/news/world-middle-east-24080193}.
\newblock Accessed: 2020-07-02.

\bibitem[Ahelegbey et~al.(2016)Ahelegbey, Billio, and
  Casarin]{ahelegbey2016sparse}
D.~F. Ahelegbey, M.~Billio, and R.~Casarin.
\newblock Sparse graphical vector autoregression: a bayesian approach.
\newblock \emph{Annals of Economics and Statistics/Annales d'{\'E}conomie et de
  Statistique}, \penalty0 (123/124):\penalty0 333--361, 2016.

\bibitem[Barrodale and Roberts(1974)]{barrodalerobertsalgorithm}
I.~Barrodale and F.~Roberts.
\newblock Solution of an overdetermined system of equations in the $\ell_1$
  norm.
\newblock \emph{Communications of the ACM}, 17\penalty0 (6):\penalty0 319--320,
  1974.

\bibitem[Basu et~al.(2019)Basu, Das, Michailidis, and Purnanandam]{preprint}
S.~Basu, S.~Das, G.~Michailidis, and A.~K. Purnanandam.
\newblock A system-wide approach to measure connectivity in the financial
  sector.
\newblock \emph{Available at SSRN 2816137}, 2019.

\bibitem[Billio et~al.(2012)Billio, Getmansky, Lo, and
  Pelizzon]{billio2012econometric}
M.~Billio, M.~Getmansky, A.~W. Lo, and L.~Pelizzon.
\newblock Econometric measures of connectedness and systemic risk in the
  finance and insurance sectors.
\newblock \emph{Journal of Financial Economics}, 104\penalty0 (3):\penalty0
  535--559, 2012.

\bibitem[Bollerslev(1986)]{GARCH}
T.~Bollerslev.
\newblock Generalized autoregressive conditional heteroskedasticity.
\newblock \emph{Journal of Econometrics}, 31\penalty0 (3):\penalty0 307--327,
  1986.

\bibitem[{CRSP Stocks}()]{CRSP}
{CRSP Stocks}.
\newblock {Available: Center for Research in Security Prices. Graduate School
  of Business. University of Chicago. Retrieved from Wharton Research Data
  Services}.
\newblock Accessed 2018.

\bibitem[Demirer et~al.(2018)Demirer, Diebold, Liu, and
  Yilmaz]{demirer2018estimating}
M.~Demirer, F.~X. Diebold, L.~Liu, and K.~Yilmaz.
\newblock Estimating global bank network connectedness.
\newblock \emph{Journal of Applied Econometrics}, 33\penalty0 (1):\penalty0
  1--15, 2018.

\bibitem[Diebold and Y{\i}lmaz(2014)]{diebold2014network}
F.~X. Diebold and K.~Y{\i}lmaz.
\newblock On the network topology of variance decompositions: Measuring the
  connectedness of financial firms.
\newblock \emph{Journal of Econometrics}, 182\penalty0 (1):\penalty0 119--134,
  2014.

\bibitem[{Financial Crisis Inquiry Commission}(2011)]{causesofcrisis}
{Financial Crisis Inquiry Commission}.
\newblock {The Financial Crisis Inquiry Report. Final Report of the National
  Commission on the Causes of the Financial and Economic Crisis in the United
  States}, 2011.

\bibitem[Fu and Knight(2000)]{fu2000asymptotics}
W.~Fu and K.~Knight.
\newblock Asymptotics for lasso-type estimators.
\newblock \emph{The Annals of statistics}, 28\penalty0 (5):\penalty0
  1356--1378, 2000.

\bibitem[Granger(1969)]{Granger}
C.~W. Granger.
\newblock Investigating causal relations by econometric models and
  cross-spectral methods.
\newblock \emph{Econometrica}, 37\penalty0 (3):\penalty0 424--438, 1969.

\bibitem[Hall and Heyde(1980)]{hall1980martingale}
P.~Hall and C.~C. Heyde.
\newblock \emph{Martingale limit theory and its application}.
\newblock Academic press, 1980.

\bibitem[Hamilton and Press(1994)]{hamilton1994time}
J.~Hamilton and P.~U. Press.
\newblock \emph{Time Series Analysis}.
\newblock Number v. 10 in Book collections on Project MUSE. Princeton
  University Press, 1994.
\newblock ISBN 9780691042893.
\newblock URL \url{https://books.google.com/books?id=B8\_1UBmqVUoC}.

\bibitem[H{\"a}rdle et~al.(2016)H{\"a}rdle, Wang, and Yu]{hardle2016tenet}
W.~K. H{\"a}rdle, W.~Wang, and L.~Yu.
\newblock Tenet: Tail-event driven network risk.
\newblock \emph{Journal of Econometrics}, 192\penalty0 (2):\penalty0 499--513,
  2016.

\bibitem[Herce(1996)]{herce}
M.~A. Herce.
\newblock Asymptotic theory of ``lad'' estimation in a unit root process with
  finite variance errors.
\newblock \emph{Econometric Theory}, 12, 1996.

\bibitem[Koenker(2005)]{quantileregressionKoenker}
R.~Koenker.
\newblock \emph{Quantile Regression}.
\newblock Cambridge University Press, 2005.

\bibitem[Koenker and Xiao(2006)]{koenker_xiao}
R.~Koenker and Z.~Xiao.
\newblock Quantile autoregression.
\newblock \emph{Journal of the American Statistical Association}, 101\penalty0
  (475):\penalty0 980--990, 2006.

\bibitem[Koenker et~al.(2014)Koenker, Mizera, et~al.]{koenker2014convex}
R.~Koenker, I.~Mizera, et~al.
\newblock Convex optimization in {R}.
\newblock \emph{Journal of Statistical Software}, 60\penalty0 (5):\penalty0
  1--23, 2014.

\bibitem[L{\"u}tkepohl(2005)]{lutkepohl}
H.~L{\"u}tkepohl.
\newblock \emph{New introduction to multiple time series analysis}.
\newblock Springer Science \& Business Media, 2005.

\bibitem[Pollard(1991)]{pollard1991asymptotics}
D.~Pollard.
\newblock Asymptotics for least absolute deviation regression estimators.
\newblock \emph{Econometric Theory}, 7\penalty0 (2):\penalty0 186--199, 1991.

\bibitem[{ProPublica}()]{TARP}
{ProPublica}.
\newblock {Bailout Recipients}.
\newblock \emph{Available at https://projects.propublica.org/bailout/list}.
\newblock {Accessed 2018}.

\bibitem[Tibshirani(1996)]{Lasso}
R.~Tibshirani.
\newblock Regression shrinkage and selection via the lasso.
\newblock \emph{Journal of the Royal Statistical Society. Series B
  (Methodological)}, 58\penalty0 (1):\penalty0 267--288, 1996.

\bibitem[Tobias and Brunnermeier(2016)]{adrian2011covar}
A.~Tobias and M.~K. Brunnermeier.
\newblock Covar.
\newblock \emph{The American Economic Review}, 106\penalty0 (7):\penalty0 1705,
  2016.

\bibitem[Wu and Liu(2009)]{wu2009variable}
Y.~Wu and Y.~Liu.
\newblock Variable selection in quantile regression.
\newblock \emph{Statistica Sinica}, pages 801--817, 2009.

\end{thebibliography}

\newpage
\begin{center}
{\large\bf SUPPLEMENTARY MATERIAL}
\end{center}

\appendix
\section{List of U.S. Financial Firms}

\begin{longtable}[H]{c | c | c}
\textbf{Firm Name} & \textbf{Sector} & \textbf{Ticker Symbol} \\ \hline\hline
A F L A C INC & INS & AFL \\
ACE LTD & INS & ACE \\
AETNA INC NEW & INS & AET \\
AFFILIATED MANAGERS GROUP INC & PB & AMG \\
ALLIANCE CAPITAL MGMT HLDG L P & PB & AC \\
ALLSTATE CORP & INS & ALL  \\
AMERICAN EXPRESS CO & BA & AXP \\
AMERICAN INTERNATIONAL GROUP & INS & AIG \\
APOLLO INVESTMENT CORP & PB & AINV \\
ASSURANT INC & INS & AIZ \\
AMERITRADE HOLDING CORP NEW & PB & AMTD \\
AMVESCAP PLC & PB & AVZ \\
AON CORP & INS & AON \\
B B \& T CORP & BA & BBT \\
BANK MONTREAL QUE & BA & BMO \\
BANK NEW YORK INC & BA & BK \\
BANK OF AMERICA CORP & BA & BAC \\
BANK OF NOVA SCOTIA & BA & BNS \\
BERKSHIRE HATHAWAY INC DEL & INS & BRK \\
BLACKROCK INC & PB & BLK \\
C I G N A CORP & INS & CI \\
C N A FINANCIAL CORP & INS & CNA \\
CANADIAN IMPERIAL BANK COMMERCE & BA & CM \\
CAPITAL ONE FINANCIAL CORP & BA & COF \\
CHICAGO MERCANTILE EXCH HLDG INC & PB & CME \\
CHUBB CORP & INS & CB \\
CITIGROUP & BA & C \\
DEUTSCHE BANK A G & BA & DB \\
E TRADE FINANCIAL CORP & PB & ETFC \\
EATON VANCE CORP & PB & EV \\
FEDERAL HOME LOAN MORTGAGE CORP & BA & FRE \\
FEDERAL NATIONAL MORTGAGE ASSN & BA & FNM \\
FEDERATED INVESTORS INC PA & PB & FII \\
FRANKLIN RESOURCES INC & PB & BEN \\
GENWORTH FINANCIAL INC & INS & GNW \\
GOLDMAN SACHS GROUP INC & PB & GS \\
HARTFORD FINANCIAL SVCS GRP INC & PB & HIG \\
HUMANA INC & INS & HUM \\
INTERACTIVE DATA CORP & PB & IDC \\
JANUS CAP GROUP INC & PB & JNS \\
JEFFERIES GROUP INC NEW & PB & JEF \\
JPMORGAN CHASE \& CO & BA & JPM \\
LAZARD LTD & PB & LAZ \\
LEGG MASON INC & PB & LM \\
LINCOLN NATIONAL CORP IN & INS & LNC \\
LOEWS CORP & INS & L \\
MANULIFE FINANCIAL CORP & INS & MFC \\
MARSH \& MCLENNAN COS INC & INS & MMC \\
MERRILL LYNCH \& CO INC & PB & MER \\
METLIFE INC & INS & MET \\
MORGAN STANLEY DEAN WITTER \& CO & PB & MS \\
MORNINGSTAR INC & PB & MORN \\
NASDAQ STOCK MARKET INC & PB & NDAQ \\
NATIONAL CITY CORP & BA & NCC \\
P N C FINANCIAL SERVICES GRP INC & BA & PNC \\
PRINCIPAL FINANCIAL GROUP INC & INS & PFG \\
PROGRESSIVE CORP OH & INS & PGR \\
PRUDENTIAL FINANCIAL INC & INS & PRU \\
RAYMOND JAMES FINANCIAL INC & PB & RJF \\
REGIONS FINANCIAL CORP & BA & RF \\
ROYAL BANK CANADA MONTREAL QUE & BA & RY \\
S E I INVESTMENTS COMPANY & PB & SEIC \\
SLM CORP & BA & SLM \\
SCHWAB CHARLES CORP NEW & PB & SCHW \\
ST PAUL TRAVELERS COS INC & INS & STA \\
STATE STREET CORP & BA & STT \\
SUN LIFE FINANCIAL INC & INS & SLF \\
SUNTRUST BANKS INC & BA & STI \\
T ROWE PRICE GROUP INC & PB & TROW \\
TORONTO DOMINION BANK ONT & BA & TD \\
U B S AG & BA & UBS \\
U S BANCORP DEL & BA & USB \\
UNITEDHEALTH GROUP INC & INS & UNH \\
WACHOVIA CORP 2ND NEW & BA & WB \\
WADDELL \& REED FINANCIAL INC & PB & WDR \\
WELLS FARGO \& CO NEW & BA & WFC \\
X L CAPITAL LTD & INS & XL \\
\caption{U.S. firm names, sectors, and ticker symbols.  BA: bank, PB: broker-dealer, INS: insurance.}
\label{table: QGC ticker symbol table}
\end{longtable}

\section{List of Indian Banks}
\begin{longtable}[H]{c | c}
\textbf{Firm Name} & \textbf{Ticker Symbol} \\ \hline\hline
ALLAHABAD BANK & ALBK \\
ANDHRA BANK & ANDB \\                       
AXIS BANK & AXSB \\                         
BANK OF BARODA & BOB \\                     
BANK OF INDIA & BOI \\                       
BANK OF MAHARASHTRA & BOMH \\               
BANK OF TRAVANCORE & SBT \\   
CANARA BANK & CBK \\                        
CENTRAL BANK OF INDIA & CBOI \\ 
CITY UNION BANK & CUBK \\                   
CORPORATION BANK & CRPBK \\
DENA BANK & DBNK \\                         
FEDERAL BANK & FB \\
HDFC BANK & HDFCB \\                        
I.N.G. VYSYA BANK & VYSB \\
ICICI BANK & ICICIBC \\                     
IDBI BANK & IDBI \\                         
INDIAN BANK & INBK \\                       
INDIAN OVERSEAS BANK & IOB \\               
INDUSIND BANK & IIB \\                      
JAMMU \& KASHMIR BANK & JKBK \\             
KARUR VYSYA BANK & KVB \\                   
KOTAK MAHINDRA BANK & KMB \\                
ORIENTAL BK OF COMMERCE & OBC \\            
PUNJAB NATIONAL BANK & PNB \\
SOUTH INDIAN BANK & SIB \\                  
STATE BANK OF INDIA & SBIN \\               
STATE BK OF BIN \& JAIPUR & SBBJ \\
SYNDICATE BANK & SNDB \\                    
UCO BANK & UCO \\                           
UNION BANK OF INDIA & UNBK \\               
VIJAYA BANK & VJYBK \\                      
YES BANK & YES \\                          
\caption{India firm names and ticker symbols.}
\label{table: QGC ticker symbol table for India}
\end{longtable}

\end{document}